\theoremstyle{plain}
\newtheorem{theorem}{Theorem}[section]
\newtheorem{proposition}[theorem]{Proposition}
\newtheorem{corollary}[theorem]{Corollary}
\newtheorem{lemma}[theorem]{Lemma}
\newtheorem{observation}[theorem]{Observation}
\theoremstyle{definition}
\newtheorem{definition}[theorem]{Definition}
\newtheorem{remark}[theorem]{Remark}
\newtheorem{question}[theorem]{Question}
\newcommand{\Lem}[1]{Lem.~\ref{#1}\xspace}
\newcommand{\Prop}[1]{Prop.~\ref{#1}\xspace}
\newcommand{\Thm}[1]{Thm.~\ref{#1}\xspace}
\newcommand{\wt}{\text{wt}}
\DeclareMathOperator{\Aut}{Aut}
\DeclareMathOperator{\rad}{Rad}
\DeclareMathOperator{\pker}{PKer}
\newcommand{\cc}[1]{\ensuremath{\mathsf{#1}}}
\newcommand{\algprobm}[1]{\textsc{#1}\xspace}
\newcommand{\Soc}{\text{Soc}}
\newcommand{\Fac}{\text{Fac}}
\title{On the Descriptive Complexity of Groups without Abelian Normal Subgroups \\
{\large (Extended Abstract)}}
\author{Joshua A. Grochow
\institute{Department of Computer Science University of Colorado Boulder, CO USA}
\institute{Department of Mathematics University of Colorado Boulder, CO USA}
\email{jgrochow@colorado.edu}
\and
Michael Levet
\institute{Department of Computer Science, College of Charleston, SC USA}
\email{levetm@cofc.edu}
}
\begin{document}
\maketitle

\begin{abstract}
In this paper, we explore the descriptive complexity theory of finite groups by examining the power of the second Ehrenfeucht--Fra\"iss\'e bijective pebble game in Hella's (\textit{Ann. Pure Appl. Log.}, 1989) hierarchy. This is a Spoiler--Duplicator game in which Spoiler can place up to two pebbles each round. While it trivially solves graph isomorphism, it may be nontrivial for finite groups, and other ternary relational structures. We first provide a novel generalization of Weisfeiler--Leman (WL) coloring, which we call \textit{2-ary} WL. We then show that the 2-ary WL is equivalent to the second Ehrenfeucht--Fra\"iss\'e bijective pebble game in Hella's hierarchy. 

Our main result is that, in the pebble game characterization, only $O(1)$ pebbles and $O(1)$ rounds are sufficient to identify all groups without Abelian normal subgroups (a class of groups for which isomorphism testing is known to be in $\mathsf{P}$; 
Babai, Codenotti, \& Qiao, ICALP 2012). In particular, we show that within the first few rounds, Spoiler can force Duplicator to select an isomorphism between two such groups at each subsequent round. By Hella's results (\emph{ibid.}), this is equivalent to saying that these groups are identified by formulas in first-order logic with generalized 2-ary quantifiers, using only $O(1)$ variables and $O(1)$ quantifier depth. 
\end{abstract}

\section{Introduction}
\label{sec:introduction}
Descriptive complexity theory studies the relationship between the complexity of describing a given problem in some logic, and the complexity of solving the problem by an algorithm. When the problems involved are isomorphism problems, Immerman and Lander \cite{ImmermanLander1990} showed that complexity of a logical sentence describing the isomorphism type of a graph was essentially the same as the Weisfeiler--Leman coloring dimension of that graph, and the complexity of an Ehrenfeucht--Fra\"iss\'e pebble game (see also \cite{CFI}). 

It is a well-known open question whether there is a logic that exactly captures the complexity class $\textsf{P}$ on unordered (unlabeled) structures; on ordered structures such a logic was given by Immerman \cite{ImmermanPTime} and Vardi \cite{VardiPTime}. The difference between these two settings is essentially the \textsc{Graph Canonization} problem, whose solution allows one to turn an unordered graph into an ordered graph in an isomorphism-preserving way.

One natural approach in trying to capture $\textsf{P}$ on unordered structures is thus to attempt to extend first-order logic $\textsf{FO}$ by generalized quantifiers (c.f., Mostowski \cite{Mostowski1957} and Lindstrom \cite{Lindstrom}) in the hopes that the augmented logics can characterize finite graphs up to isomorphism, thus reducing the unordered case to the previously solved ordered case. A now-classical approach, initiated by Immerman \cite{ImmermanPTime}, was to augment fixed-point logic with counting quantifiers, which can be analyzed in terms of an equivalence induced by (variable confined) fragments of first-order logic with counting. However, Cai, F\"urer, \& Immerman \cite{CFI} showed that $\textsf{FO}+\textsf{LFP}$ plus counting does not capture $\textsf{P}$ on finite graphs. More generally, Flum \& Grohe have characterized when $\textsf{FO}$ plus counting captures $\textsf{P}$ on unordered structures \cite{GroheFlum}.

The approach of Cai, F\"urer, \& Immerman (ibid., see also \cite{ImmermanLander1990}) was to prove a three-way equivalence: between (1) counting logics, (2) the higher-dimensional Weisfeiler--Leman coloring procedure, and (3) Ehrenfeucht--Fra\"iss\'e pebble games. Ehrenfeucht--Fra\"iss\'e pebble games \cite{Ehrenfeucht, Fraisse} have long been an important tool in proving the inexpressibility of certain properties in various logics; in this case, they used such games to show that the logics could not express the difference between certain pairs of non-isomorphic graphs. Consequently, Cai, F\"urer, \& Immerman ruled out the Weisfeiler--Leman algorithm as a polynomial-time isomorphism test for graphs, which resolved a long-standing open question in isomorphism testing. Nonetheless, the Weisfeiler--Leman coloring procedure is a key subroutine in many algorithms for \textsc{Graph Isomorphism}, including Babai's quasi-polynomial-time algorithm \cite{BabaiGraphIso}. It is thus interesting to study its properties and its distinguishing power.

While the result of Cai, F\"urer, \& Immerman ruled out Weisfeiler--Leman as a polynomial-time isomorphism test for graphs, for \emph{groups} it remains an interesting open question. The general WL procedure for groups was introduced by Brachter \& Schweitzer \cite{WLGroups} and has been studied in several papers since then \cite{BrachterSchweitzerWLLibrary,GrochowLevetWL}. Outside the scope of WL, it is known that \textsc{Group Isomorphism} is $\textsf{AC}^{0}$-reducible to \textsc{Graph Isomorphism}, and there is no $\textsf{AC}^0$ reduction in the opposite direction \cite{ChattopadhyayToranWagner}. For this and other reasons, group isomorphism is believed to be the easier of the two problems, so it is possible that WL---and more generally, tools from descriptive complexity---could yield stronger results for groups than for graphs.

On graphs, which are binary relational structures, if Spoiler is allowed to pebble two elements per turn, then Spoiler can win on any pair of non-isomorphic graphs. However, groups are ternary relational structures (the relation is $\{(a,b,c) : ab=c\}$), so such a game may yield nontrivial insights into the descriptive complexity of finite groups. Hella \cite{Hella1989,Hella1993} introduced such games in a more general context, and showed that allowing Spoiler to pebble $q$ elements per round corresponded to the generalized $q$-ary quantifiers of Mostowski \cite{Mostowski1957} and Lindstrom \cite{Lindstrom}. When $q=1$, Hella shows that this pebble game is equivalent in power to the $\textsf{FO}$ plus counting logics mentioned above. Our focus in this paper is to study the power of the $q=2$-ary game for identifying finite groups. 

\noindent \\ \textbf{Main Results.} In this paper, we initiate the study of Hella's $2$-ary Ehrenfeucht--Fra\"iss\'e-style pebble game, in the setting of groups. Our main result is that this pebble game efficiently characterizes isomorphism in a class of groups for which isomorphism testing is known to be in $\mathsf{P}$, but only by quite a nontrivial algorithm (see remark below). The full version of this paper appears on arXiv \cite{grochow2022descriptive}.

\vskip 5pt
\begin{theorem} \label{thm:SemisimpleMain}
Let $G$ be a group with no Abelian normal subgroups (a.k.a. Fitting-free or semisimple), and let $H$ be arbitrary. If $G \not \cong H$, then Spoiler has a winning strategy in the Ehrenfeucht--Fra\"iss\'e game at the second level of Hella's hierarchy, using $9$ pebbles and $O(1)$ rounds.
\end{theorem}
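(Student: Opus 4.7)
The plan is to exploit the structure theorem for Fitting-free groups. Since $G$ has no Abelian normal subgroups, the socle $\Soc(G) \cong T_1^{m_1} \times \cdots \times T_s^{m_s}$ decomposes as a direct product of non-Abelian simple groups, and the conjugation map $G \to \Aut(\Soc(G))$ is injective (because $C_G(\Soc(G)) = 1$ for semisimple $G$). Thus $G$ is determined up to isomorphism by its socle together with the embedding of $G$ into $\Aut(\Soc(G))$. Spoiler's overall strategy is to force Duplicator's bijection to respect this data in $O(1)$ rounds with at most $9$ pebbles on the board at any one time.

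First I would show that a bounded number of 2-ary WL refinement rounds suffice to recognize the socle as a characteristic subset of $G$, and further to distinguish the isotypic layers $T_i^{m_i}$, which are themselves characteristic subgroups. Socle membership, minimality of a normal subgroup, and the partition of the socle into isotypic layers are all expressible in terms of conjugation and centralizer conditions, which the 2-ary coloring can evaluate on pairs. If Duplicator's bijection ever violates one of these characteristic properties, Spoiler pebbles a witness pair and wins immediately. After this preprocessing, we may assume that Duplicator's bijection restricts to a bijection between socles and preserves the layer partition.

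Next, Spoiler selects one isotypic layer at a time and pebbles two generators of a single simple factor $T_i$ inside that layer; the 2-ary WL colors force Duplicator to match $T_i$ to a specific simple factor in $H$. Because non-Abelian simple groups are $2$-generated and because the multiplication table of $T_i$ is encoded in the 2-ary colors $(x,y) \mapsto xy$, the bijection on these two pebbles extends consistently to an isomorphism of the specified simple factors. Spoiler then pebbles $O(1)$ additional elements of $G$ that witness the conjugation action on the pebbled generators, thereby forcing Duplicator to reproduce a piece of the embedding $G \hookrightarrow \Aut(\Soc(G))$. By reusing pebbles across $O(1)$ rounds, Spoiler iterates this procedure through all isotypic layers without exceeding the $9$-pebble budget.

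The hardest part of the argument will be keeping the pebble and round counts bounded uniformly in $s$ and the $m_i$, which can grow with $|G|$. The key is that a few rounds of 2-ary WL refinement allow the color of a pair $(g, x)$ to track not only the product $gx$ but also the conjugate $gxg^{-1}$, so the action of every unpebbled group element on the socle is recorded globally by the coloring; Spoiler's few pebbles then only need to select and locally verify representatives, while the global bookkeeping is handled by the refinement. In effect, this executes in the pebble game the Babai--Codenotti--Qiao reduction of isomorphism in this class to a twisted equivariant code equivalence problem.
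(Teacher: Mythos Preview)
Your high-level plan is right: force Duplicator to respect $\Soc(G)$ and then the conjugation action. But there is a genuine gap in the step where you ``iterate through all isotypic layers'' while reusing pebbles. The number of simple direct factors of $\Soc(G)$ can be $\Theta(\log n)$, so any strategy that visits them one at a time uses $\omega(1)$ rounds. And once you lift the pebbles off the generators of $S_i$ to move on to $S_{i+1}$, Duplicator's next bijection is no longer anchored on $S_i$; nothing prevents Duplicator from permuting the already-visited factors inconsistently with the not-yet-visited ones. Your appeal to ``global bookkeeping by refinement'' does not close this: the $2$-ary coloring can certify local properties of pairs, but it does not by itself pin down a \emph{specific} correspondence between the factors of $\Soc(G)$ and those of $\Soc(H)$ unless pebbles are holding it in place.

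The paper's proof resolves exactly this difficulty with a trick you are missing. Writing $S_i = \langle x_i, y_i\rangle$, Spoiler spends a \emph{single} $2$-ary move to pebble the pair of full-support products $(x_1 x_2 \cdots x_m,\; y_1 y_2 \cdots y_m)$ and never moves it. A weight lemma then shows that any subsequent Duplicator bijection must preserve the number of nontrivial socle components of each element; combined with the pebbled products, this forces every later bijection to send each individual $x_i$ into the fixed set $\{h_1,\dotsc,h_m\}$ of components of $f(x_1\cdots x_m)$ (and similarly for the $y_i$). In effect two pebbles anchor generators for \emph{all} factors simultaneously, up to a permutation of the factors, and from there the socle-isomorphism and permutational-isomorphism conditions can each be checked with $O(1)$ further pebbles and rounds. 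Your outline would work if you replace the layer-by-layer iteration with this product-pebbling idea.
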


In proving \Thm{thm:SemisimpleMain}, we show that with the use of only a few pebbles, Spoiler can effectively force Duplicator to select an isomorphism of $G$ and $H$. We contrast this with the setting of Weisfeiler--Leman (which is equivalent to the $1$-ary pebble game), for which the best upper bound we have on the WL-dimension is the trivial bound of $\log n$. Furthermore, we do not have any lower bounds on the WL-dimension for semisimple groups.

\begin{remark}
Every group $G$ can be written as an extension of its solvable radical $\rad(G)$ by the quotient $G/\rad(G)$, which does not have Abelian normal subgroups. As such, the latter class of groups is quite natural, both group-theoretically and computationally. Computationally, it has been used in algorithms for general finite groups both in theory (e.g., \cite{Babai1999GroupsSA,BabaiBealsSeress}) and in practice (e.g., \cite{CH03}). Isomorphism testing in this family of groups can be solved efficiently in practice \cite{CH03}, and is known to be in $\cc{P}$ through a series of two papers \cite{BCGQ, BCQ}.

%The family of groups without Abelian normal subgroups has quite rich structure. It took a series of two papers \cite{BCGQ, BCQ} to obtain a polynomial-time isomorphism test for this family. Additionally, this family of groups is central to placing \algprobm{Group Isomorphism} into $\textsf{P}$ \cite{Babai1999GroupsSA} as well as in developing efficient practical algorithms for computer algebra systems \cite{CH03}. Namely, any finite group $G$ can be written as an extension of its solvable radical $\rad(G)$ by the quotient $G/\rad(G)$, which does not have Abelian normal subgroups. 

%\Thm{thm:SemisimpleMain} shows that groups without Abelian normal subgroups have low descriptive complexity with respect to the Ehrenfeucht--Fra\"iss\'e game at the second level of Hella's hierarchy. It is not known that solvable groups are identified by this Ehrenfeucht--Fra\"iss\'e game with even $o(\log n)$ pebbles. Thus, \Thm{thm:SemisimpleMain} serves as additional evidence that identifying solvable groups is the hard case of \algprobm{Group Isomorphism}.
\end{remark}

In Section~\ref{sec:coloring} we also complete the picture by giving a Weisfeiler--Leman-style coloring procedure and showing that it corresponds precisely to Hella's $q$-ary pebble games and $q$-ary generalized Lindstrom quantifiers \cite{Lindstrom}. When the groups are given by their multiplication tables, this procedure runs in time $n^{\Theta(\log^{2} n)}$ by reduction to \algprobm{Graph Isomorphism}. We note that Hella's results deal with infinitary logics \cite{Hella1989,Hella1993}. However, as we are dealing with finite groups, the infinitary quantifiers and connectives are not necessary (see the discussion in \cite{Hella1993}, right above Theorem 5.3).

\noindent \\ \textbf{Further Related Work.} Despite the fact that Weisfeiler--Leman is insufficient to place \algprobm{Graph Isomorphism} (\algprobm{GI}) into $\textsf{PTIME}$, it remains an active area of research. For instance, Weisfeiler--Leman is a key subroutine in Babai's quasipolynomial-time \algprobm{GI} algorithm \cite{BabaiGraphIso}. Furthermore, Weisfeiler--Leman has led to advances in simultaneously developing both efficient isomorphism tests and the descriptive complexity theory for finite graphs- see for instance, \cite{GroheBook,GroheVerbitsky,KieferMcKay,KieferPonomarenkoSchweitzer,grohe2019linear,grohe2019rankwidth, grohe2021logarithmic,KieferSchweitzerSelman,ajtaifagin1990,ARORA199797,Rossman2009EhrenfeuchtFrassGO}. Weisfeiler--Leman also has close connections to the Sherali--Adams hierarchy in linear programming  \cite{GroheOttoLinearEquations}.

The complexity of the \algprobm{Group Isomorphism} (\algprobm{GpI}) problem is a well-known open question. In the Cayley (multiplication) table model, $\algprobm{GpI}$ belongs to $\textsf{NP} \cap \textsf{coAM}$. The generator-enumerator algorithm, attributed to Tarjan in 1978 \cite{MillerTarjan}, has time complexity $n^{\log_{p}(n) + O(1)}$, where $n$ is the order of the group and $p$ is the smallest prime dividing $n$. This bound has escaped largely unscathed: Rosenbaum \cite{Rosenbaum2013BidirectionalCD} (see \cite[Sec. 2.2]{GR16}) improved this to $n^{(1/4)\log_p(n) + O(1)}$. And even the impressive body of work on practical algorithms for this problem, led by Eick, Holt, Leedham-Green and O'Brien (e.\,g., \cite{BEO02,ELGO02,BE99,CH03}) still results in an $n^{\Theta(\log n)}$-time algorithm in the general case (see \cite[Page 2]{WilsonSubgroupProfiles}). In the past several years, there have been significant advances on algorithms with worst-case guarantees on the serial runtime for special cases of this problem including Abelian groups \cite{Kavitha,Vikas,Savage}, direct product decompositions \cite{WilsonDirectProductsArxiv,KayalNezhmetdinov}, groups with no Abelian normal subgroups \cite{BCGQ,BCQ}, coprime and tame group extensions \cite{Gal09,QST11,BQ, GQ15}, low-genus $p$-groups and their quotients \cite{LW12,BMWGenus2}, Hamiltonian groups \cite{DasSharma}, and groups of almost all orders \cite{DietrichWilson}.

Key motivation for $\algprobm{GpI}$ is due to its close relation to $\algprobm{GI}$. In the Cayley (verbose) model, $\algprobm{GpI}$ reduces to $\algprobm{GI}$ \cite{ZKT}, 
while $\algprobm{GI}$ reduces to the succinct $\algprobm{GpI}$ problem \cite{Heineken1974TheOO,Mekler} (recently simplified \cite{HeQiao}). In light of Babai's breakthrough result that $\algprobm{GI}$ is quasipolynomial-time solvable \cite{BabaiGraphIso}, $\algprobm{GpI}$ in the Cayley model is a key barrier to improving the complexity of $\algprobm{GI}$. Both verbose $\algprobm{GpI}$ and $\algprobm{GI}$ are considered to be candidate $\textsf{NP}$-intermediate problems, that is, problems that belong to $\textsf{NP}$, but are neither in $\textsf{P}$ nor $\textsf{NP}$-complete \cite{Ladner}. There is considerable evidence suggesting that $\algprobm{GI}$ is not $\textsf{NP}$-complete \cite{Schoning,BuhrmanHomer,ETH,BabaiGraphIso,GILowPP,ArvindKurur}. As verbose $\algprobm{GpI}$ reduces to $\algprobm{GI}$, this evidence also suggests that $\algprobm{GpI}$ is not $\textsf{NP}$-complete. It is also known that $\algprobm{GI}$ is strictly harder than $\algprobm{GpI}$ under $\textsf{AC}^{0}$ reductions \cite{ChattopadhyayToranWagner}. 

While the descriptive complexity of graphs has been extensively studied, the work on the descriptive complexity of groups is scant compared to the algorithmic literature on \algprobm{Group Isomorphism} (\algprobm{GpI}). There has been work relating first order logics and groups \cite{FiniteGroupsFOL}, as well as work examining the descriptive complexity of finite abelian groups \cite{DescriptiveComplexityAbelianGroups}. Recently, Brachter \& Schweitzer \cite{WLGroups} introduced three variants of Weisfeiler--Leman for groups, including corresponding logics and pebble games. These pebble games correspond to the first level of Hella's hierarchy \cite{Hella1989,Hella1993}. In particular, Brachter \& Schweitzer showed that $3$-dimensional Weisfeiler--Leman can distinguish $p$-groups arising from the CFI graphs \cite{CFI} via Mekler's construction \cite{Mekler}, suggesting that $\textsf{FO} + \textsf{LFP} + \textsf{C}$ may indeed capture $\textsf{PTIME}$ on groups. Determining whether even $o(\log n)$-dimensional Weisfeiler--Leman can resolve \algprobm{GpI} is an open question.

The use on Weisfeiler--Leman for groups is quite new. To the best of our knowledge, using Weisfeiler--Leman for \algprobm{Group Isomorphism} testing was first attempted by Brooksbank, Grochow, Li, Qiao, \& Wilson \cite{BGLQW}. Brachter \& Schweitzer \cite{WLGroups} subsequently introduced three variants of Weisfeiler--Leman for groups that more closely resemble that of graphs. In particular, Brachter \& Schweitzer \cite{WLGroups}  characterized their algorithms in terms of logics and Ehrenfeucht--Fra\"iss\'e pebble games. The relationship between the works of Brachter \& Schweitzer and Brooksbank, Grochow, Li, Qiao, \& Wilson \cite{BGLQW} is an interesting question.

In subsequent work, Brachter \& Schweitzer \cite{BrachterSchweitzerWLLibrary} further developed the descriptive complexity of finite groups. They showed in particular that low-dimensional Weisfeiler--Leman can detect key group-theoretic invariants such as composition series, radicals, and quotient structure. Furthermore, they also showed that Weisfeiler--Leman can identify direct products in polynomial-time, provided it can also identify the indecomposable direct factors in polynomial-time. Grochow \& Levet \cite{GrochowLevetWL} extended this result to show that Weisfeiler--Leman can compute direct products in parallel, provided it can identify each of the indecomposable direct factors in parallel. Additionally, Grochow \& Levet showed that constant-dimensional Weisfeiler--Leman can in a constant number of rounds identify coprime extensions $H \ltimes N$, where the normal Hall subgroup $N$ is Abelian and the complement $H$ is $O(1)$-generated. This placed isomorphism testing into $\textsf{L}$; the previous bound for isomorphism testing in this family was $\textsf{P}$ \cite{QST11}. Grochow \& Levet also ruled out $\textsf{FO} + \textsf{LFP}$ as a candidate logic for capturing $\textsf{PTIME}$ on finite groups, by showing that the count-free Weisfeiler--Leman algorithm cannot even identify Abelian groups in polynomial-time.

\section{Preliminaries}

%\subsection{Groups}

%Unless stated otherwise, all groups are assumed to be finite and represented by their Cayley tables. 
%For a group of order $n$, the Cayley table has $n^{2}$ entries, each represented by a binary string of size $\lceil \log_{2}(n) \rceil$. For an element $g$ in the group $G$, we denote the \textit{order} of $g$ as $|g|$. %We use $d(G)$ to denote the minimum size of a generating set for the group $G$. 

%The \textit{socle} of a group $G$, denoted $\text{Soc}(G)$, is the subgroup generated by the minimal normal subgroups of $G$. If $G$ has no Abelian normal subgroups, then $\text{Soc}(G)$ decomposes as the direct product of non-Abelian simple factors. The \textit{normal closure} of a subset $S \subseteq G$, denoted $\text{ncl}(S)$, is the smallest normal subgroup of $G$ that contains $S$.

%%%
%\subsection{Pebbling Game}

We recall the bijective pebble game of Hella \cite{Hella1989,Hella1993}, in the context of WL on graphs as that is likely more familiar to more readers. This game is often used to show that two graphs $X$ and $Y$ cannot be distinguished by $k$-WL. The game is an Ehrenfeucht--Fra\"iss\'e game, with two players: Spoiler and Duplicator. Each graph begins with $k+1$ pebbles, $p_1, \dotsc, p_{k+1}$ for $X$ and $p_1^{\prime}, \dotsc, p_{k+1}^{\prime}$ for $Y$, which are placed beside the graphs. Each round proceeds as follows.
\begin{enumerate}
\item Spoiler chooses $i \in [k+1]$, and picks up pebbles $p_i, p^{\prime}_i$.
\item We check the winning condition, which will be formalized later.\footnote{In the literature, some authors check the winning condition at this point, and others check the winning condition at the end of each round. The choice merely has the effect of changing the number of required pebbles by at most $1$ in ordinary WL, or at most $q$ in the $q$-ary version, and changing the number of rounds by at most 1. We have chosen this convention for consistency with other works on WL specific to groups \cite{WLGroups, BrachterSchweitzerWLLibrary, GrochowLevetWL}.}
\item Duplicator chooses a bijection $f : V(X) \to V(Y)$.
\item Spoiler places $p_{i}$ on some vertex $v \in V(X)$. Then $p_{i}^{\prime}$ is placed on $f(v)$. 
\end{enumerate} 

In a given round, let $v_{1}, \ldots, v_{m}$ be the vertices of $X$ pebbled at the end of step 1 (in the list above), and let $v_{1}^{\prime}, \ldots, v_{m}^{\prime}$ be the corresponding pebbled vertices of $Y$. Spoiler wins precisely if the map $v_{\ell} \mapsto v_{\ell}^{\prime}$ is not an isomorphism of the induced subgraphs $X[\{v_{1}, \ldots, v_{m}\}]$ and $Y[\{v_{1}^{\prime}, \ldots, v_{m}^{\prime}\}]$. Otherwise, at that point, Duplicator wins the game. Spoiler wins, by definition, at round $0$ if $X$ and $Y$ do not have the same number of vertices. We note that $X$ and $Y$ are not distinguished by the first $r$ rounds of $k$-WL if and only if Duplicator wins the first $r$ rounds of the $(k+1)$-pebble game \cite{Hella1989,Hella1993,CFI}. 

Hella \cite{Hella1989,Hella1993} exhibited a hierarchy of pebble games where, for $q \geq 1$, Spoiler could pebble a sequence of $1 \leq j \leq q$ elements $(v_{1}, \ldots, v_{j}) \mapsto (f(v_{1}), \ldots, f(v_{j}))$ in a single round; more formally, following the description above, in step 1, Spoiler picks up $q$ pebbles $p_{i_1}, \dotsc, p_{i_q}$ and their partners $p'_{i_1}, \dotsc, p'_{i_q}$, with step 4 changed accordingly. The case of $q = 1$ corresponds to the case of Weisfeiler--Leman. As remarked by Hella \cite[p.~6, just before \S 4]{Hella1993}, the $q$-ary game immediately identifies all relational structures of arity $\leq q$. For example, the $q=2$ game on graphs solves $\algprobm{GI}$: for if two graphs $X$ and $Y$ are non-isomorphic, then any bijection $f : V(X) \to V(Y)$ that Duplicator selects must map an adjacent pair of vertices $u,v$ in $X$ to a non-adjacent pair $f(u), f(v)$ in $Y$ or vice-versa. Spoiler immediately wins by pebbling $(u, v) \mapsto (f(u), f(v))$. However, as groups are ternary relational structures (the relation being $\{(a,b,c) : a,b,c \in G, ab=c\}$), the $q=2$ case can, at least in principle, be non-trivial on groups. 

Brachter \& Schweitzer \cite{WLGroups} adapted Hella's \cite{Hella1989,Hella1993} pebble games in the $q = 1$ case to the setting of groups, obtaining three different versions. Their Version III involves reducing to graphs and playing the pebble game on graphs, so we don't consider it further here. Versions I and II are both played on the groups $G$ and $H$ directly.

Both versions are played identically as for graphs, with the only difference being the winning condition. We recall the following standard definitions in order to describe these winning conditions.

\vskip 5pt
\begin{definition}
Let $G,H$ be two groups. Given $k$-tuples $\overline{g} = (g_1, \dotsc, g_k) \in G^k$ and $\overline{h} = (h_1, \dotsc, h_k) \in H^k$, we say $(\overline{g}, \overline{h})$ ...
\begin{enumerate}
\item ...\emph{gives a well-defined map} if $g_i = g_j \Leftrightarrow h_i = h_j$ for all $i \neq j$;

\item ...are \emph{partially isomorphic} or \emph{give a partial isomorphism} if they give a well-defined map, and for all $i,j,k$ we have $g_i g_j = g_k \Leftrightarrow h_i h_j = h_k$;

\item ...are \emph{marked isomorphic} or \emph{give a marked isomorphism} if it gives a well-defined map, and the map extends to an isomorphism $\langle g_1, \dotsc, g_k \rangle \to \langle h_1, \dotsc, h_k \rangle$.
\end{enumerate}
\end{definition}

Let $v_{1}, \ldots, v_{m}$ be the group elements of $G$ pebbled at the end of step 1, and let $v_{1}^{\prime}, \ldots, v_{m}^{\prime}$ be the corresponding pebbled vertices of $H$. In Version I, Spoiler wins precisely if $(\overline{v}, \overline{v}^{\prime})$ does not give a partial isomorphism, and in Version II Spoiler wins precisely if $(\overline{v}, \overline{v}^{\prime})$ does not give a marked isomorphism.

Both Versions I and II may be generalized to allow Spoiler to pebble up to $q$ group elements at a single round, for some $q \geq 1$. Mimicking the proof above for $q=2$ for graphs, we have that $q = 3$ is sufficient to solve $\algprobm{GpI}$ in a single round. The distinguishing power, however, of the $q = 2$ game for groups remains unclear, and is the main subject of this paper. As we are interested in the round complexity, we introduce the following notation. 

\vskip 5pt
\begin{definition}[Notation for pebbles, rounds, arity, and WL version]
Let $k \geq 2, r \geq 1$, $q \geq 1$, and $J \in \{I, II\}$. Denote $(k,r)$-WL$^q_J$ to be the $k$-pebble, $r$-round, $q$-ary Version $J$ pebble game.
\end{definition}

We refer to $q$ as the \emph{arity} of the pebble game, as it corresponds to the arity of generalized quantifiers\footnote{As our focus in this paper is not on the viewpoint of generalized quantifiers, we refer the reader to \cite{Hella1989} for details.} in a logic whose distinguishing power is equivalent to that of the game:

\begin{remark}[Equivalence with logics with generalized $2$-ary quantifiers]
Hella \cite{Hella1989} describes the game (essentially the same as our description, but with no restriction on number of pebbles, and a transfinite number of rounds) for general $q$ at the bottom of p.~245, for arbitrary relational structures. We restrict to the case of $q = 2$, a finite number of pebbles and rounds, and the (relational) language of groups. Hella proves that this game is equivalent to first-order logic with arbitrary $q$-ary equantifiers in \cite[Thm.~2.5]{Hella1989}.
\end{remark}

\vskip 5pt
\begin{observation}
In the $2$-ary pebble game, we may assume that Duplicator selects bijections that preserve inverses. 
\end{observation}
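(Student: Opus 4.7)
The plan is to show that any bijection Duplicator plays that fails to preserve inverses can be exploited by Spoiler in a single round via the 2-ary capability, so in the analysis of winning strategies we lose no generality by restricting Duplicator to inverse-preserving bijections. The argument has two steps: first secure a pebble on the identity to serve as an anchor, then use that anchor to catch any failure of inverse preservation.

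For the identity anchor, Spoiler uses an early round to place a pebble on $e_G \in G$; Duplicator's bijection $f$ forces the partner pebble onto $f(e_G) \in H$. Since $e_G$ is the unique element of $G$ satisfying $x \cdot x = x$, the singleton pair $(e_G, f(e_G))$ fails the partial-isomorphism condition at the index triple $(1,1,1)$ unless $f(e_G) = e_H$: we have $e_G \cdot e_G = e_G$ in $G$, while $f(e_G) \cdot f(e_G) = f(e_G)$ in $H$ only when $f(e_G) = e_H$. Hence Duplicator must respond with $f(e_G) = e_H$ or lose outright at the next winning-condition check, and Spoiler retains this pebble pair throughout the rest of the argument.

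For the inverse-preservation step, suppose at some later round Duplicator selects $f$ with $f(a^{-1}) \neq f(a)^{-1}$ for some $a \in G$. Spoiler picks up two free pebbles in step 1; after observing $f$ in step 3, Spoiler places one pebble on $a$ and the other on $a^{-1}$ in step 4, and by the game rules the partners in $H$ are placed on $f(a)$ and $f(a^{-1})$. Since $(e_G, e_H)$ is still pebbled, the current tuples include $a, a^{-1}, e_G$ in $G$ matched with $f(a), f(a^{-1}), e_H$ in $H$. Then $a \cdot a^{-1} = e_G$ holds in $G$ while $f(a) \cdot f(a^{-1}) \neq e_H$ in $H$, so the partial-isomorphism check fails and Duplicator loses.

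The essential use of the 2-ary capability is in the second step: Spoiler must commit to the pair $(a, a^{-1})$ \emph{after} seeing $f$, which requires placing both pebbles in a single round. In a 1-ary game the trap would fail, since Duplicator could play a fresh and inverse-corrective bijection at the intermediate round before Spoiler is able to pebble $a^{-1}$. The only mild bookkeeping issue I anticipate is maintaining the pebble on $e_G$ through subsequent rounds, which is trivial given the $9$ pebbles available in \Thm{thm:SemisimpleMain}; in any event, any round where the $e_G$ anchor is needed can be preceded by one move that re-establishes it.
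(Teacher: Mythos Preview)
Your argument is correct and follows essentially the same two-step outline as the paper: first force $f(1_G)=1_H$, then exploit $g\cdot g^{-1}=1$ to catch any failure of inverse preservation via a single 2-ary move on $(g,g^{-1})$.

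The one notable difference is that you keep an explicit pebble on $1_G$ as an anchor so that the relation $a\cdot a^{-1}=e_G$ is witnessed among pebbled elements, making the Version~I partial-isomorphism check go through. The paper's proof is terser: it simply pebbles $(g,g^{-1})\mapsto(f(g),f(g^{-1}))$ and observes $gg^{-1}=1$ while $f(g)f(g^{-1})\neq 1$. This works directly in Version~II, since any marked isomorphism of $\langle g\rangle$ must send $g^{-1}$ to $f(g)^{-1}$, so no separate identity pebble is needed. Your version costs one extra pebble and one extra round but has the virtue of going through verbatim in Version~I; the paper's version is leaner but leans on the Version~II winning condition (which is the version used throughout the rest of the paper anyway). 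Your closing bookkeeping worry about maintaining the identity anchor is therefore unnecessary if one works in Version~II.
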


\begin{proof}
Suppose not. First, Duplicator must select bijections that preserve the identity, for if not, Spoiler pebbles $1_G \mapsto f(1) \neq 1_H$ and wins immediately. Next, let $f : G \to H$ be a bijection such that $f(g^{-1}) \neq f(g)^{-1}$. Spoiler pebbles $(g, g^{-1}) \mapsto (f(g), f(g^{-1}))$. Now $gg^{-1} = 1$, while $f(g)f(g^{-1}) \neq 1$. So Spoiler wins. \end{proof}

\vskip 5pt
\noindent We frequently use this observation without mention.

\section{Higher-arity Weisfeiler-Leman-style coloring corresponding to higher arity pebble games} \label{sec:coloring}
%To define the coloring we need some preliminary definitions. A $k$-tuple $(g_1, \dotsc, g_k) \in G$ generates a \emph{marked subgroup}, which is the subgroup $\langle g_1, \dotsc, g_k \rangle$ together with the $k$-tuples $(g_1, \dotsc, g_k)$. Two $k$-tuples $(x_1, \dotsc, x_k), (y_1, \dotsc, y_k)$ (not necessarily in the same group), or the two marked subgroups they generate, are \emph{marked isomorphic} if (1) for all $i,j$, $x_i = x_j \Leftrightarrow y_i = y_j$, and (2) the map $x_i \mapsto y_i$ for all $i$ induces an isomorphism of $\langle x_1, \dotsc, x_k \rangle$ with $\langle y_1, \dotsc, y_k \rangle$. 
%
%In contrast, given a function $\varphi$ with domain $\{x_1,\dotsc,x_k\}$, we say that $\varphi$ is a \emph{partial isomorphism} if $x_i x_j = x_\ell$ iff $\varphi(x_i) \varphi(x_j) = \varphi(x_\ell)$ for all (not necessarily distinct) $i,j,\ell$. Given two $k$-tuples $\overline{x} = (x_1, \dotsc, x_k)$ and $\overline{y} = (y_1, \dotsc, y_k)$, we say the $k$-tuples are \emph{partially isomorphic} if the association $x_i \mapsto y_i$ (1) gives a well-defined function, i.e. $x_i = x_j \Leftrightarrow y_i = y_j$, and (2) this function is a partial isomorphism.

Given a $k$-tuple $\overline{x} = (x_1, \dotsc, x_k) \in G^k$, a pair of distinct indices $i,j \in [k]$, and a pair of group elements $y, z$, we define $\overline{x}_{(i,j) \leftarrow (y,z)}$ to be the $k$-tuple $\overline{x}'$ that agrees with $\overline{x}$ on all indices besides $i,j$, and with $x_i' = y, x_j' = z$. If $i=j$, we require $y=z$, and we denote this $\overline{x}_{i \leftarrow y}$. 

Finally, two graphs $\Gamma_1, \Gamma_2$, with edge-colorings $c_i \colon E(\Gamma_i) \to C$ to some color set $C$ (for $i=1,2$) are color isomorphic if there is a graph isomorphism $\varphi \colon V(\Gamma_1) \to V(\Gamma_2)$ that also preserves colors, in the sense that $c_1((u,v)) = c_2((\varphi(u), \varphi(v))$ for all edges $(u,v) \in E(\Gamma_1)$.

\begin{definition}[2-ary $k$-dimensional Weisfeiler-Leman coloring]
Let $G,H$ be two groups of the same order, let $k \geq 1$. For all $k$-tuples $\overline{x}, \overline{y} \in G^k \cup H^k$:
\begin{itemize}
\item (Initial coloring, Version I) $\chi^{2,I}_0(\overline{x}) = \chi^{2,I}_0(\overline{y})$ iff $\overline{x}, \overline{y}$ are partially isomorphic.

\item (Initial coloring, Version II) $\chi^{2,II}_0(\overline{x}) = \chi^{2,II}_0(\overline{y})$ iff $\overline{x}, \overline{y}$ have the same marked isomorphism type.

\item (Color refinement) Given a coloring $\chi \colon G^k \cup H^k \to C$, the color refinement operator $R$ defines a new coloring $R(\chi)$ as follows. For each $k$-tuple $\overline{x} \in G^k$ (resp., $H^k$), we define an edge-colored graph $\Gamma_{\overline{x},\chi,i,j}$. If $i=j$, it is the graph on vertex set $V(\Gamma_{\overline{x},\chi,i,i}) = G$ (resp., $H$) with all self-loops and no other edges, where the color of each self-loop $(g,g)$ is $\chi(\overline{x}_{i \leftarrow g})$. If $i \neq j$, it is the complete directed graph with self-loops on vertex set $G$ (resp., $H$), where the color of each edge $(y,z)$ is $\chi(\overline{x}_{(i,j) \leftarrow (y,z)})$. For an edge-colored graph $\Gamma$, we use $[\Gamma]$ to denote its edge-colored isomorphism class. We then define
\[
R(\chi)(\overline{x}) = \left(\chi(\overline{x}); [\Gamma_{\overline{x},\chi,1,1}], [\Gamma_{\overline{x},\chi,1,2}], \dotsc, [\Gamma_{\overline{x},\chi,k-1,k}], [\Gamma_{\overline{x},\chi,k,k}]\right).
\]
That is, the new color consists of the old color, as well as the tuple of $\binom{k+1}{2}$ edge-colored isomorphism types of the graphs $\Gamma_{\overline{x}, \chi, i, j}$.
\end{itemize}
The refinement operator may be iterated: $R^t(\chi) := R(R^{t-1}(\chi))$, and we define the \emph{stable refinement} of $\chi$ as $R^t(\chi)$ where the partition induced by $R^t(\chi)$ on $G^k \cup H^k$ is the same as that induced by $R^{t+1}(\chi)$. We denote the stable refinement by $R^\infty(\chi)$.

Finally, for $J \in \{I, II\}$ and all $r \geq 0$, we define $\chi^{2,J}_{r+1} = R(\chi^{2,J}_r)$, and $\chi^{2,J}_{\infty} := R^\infty(\chi^{2,J}_0)$.
\end{definition}

\vskip 5pt
\begin{remark}
Brachter \& Schweitzer \cite{WLGroups} introduced Versions I and II of $1$-ary WL, which are equivalent up to a small additive constant in the WL-dimension \cite{WLGroups} and $O(\log n)$ rounds \cite{GrochowLevetWL}. For the purpose of comparison, we introduce Versions I and II of $2$-ary WL. We will see later that only one additional round suffices in the $2$-ary case (see \Thm{thm:rounds}). The differences in Versions I and II of WL (both the $1$-ary and $2$-ary variants) arise from whether the group is viewed as a structure with a ternary relational structure (Version I) or as a structure with a binary function (Version II).
\end{remark}

\begin{remark}
Since it was one of our stumbling blocks in coming up with this generalized coloring, we clarify here how this indeed generalizes the usual 1-ary WL coloring procedure. In the 1-ary ``oblivious'' $k$-WL procedure (see \cite[\S 5]{grohe}, equivalent to ordinary WL), the color of a $k$-tuple $\overline{x}$ is refined using its old color, together with a $k$-tuple of multisets 
\[
(\{\!\{\chi(x_{1 \leftarrow y}) : y \in G \}\!\}, \{\!\{\chi(x_{2 \leftarrow y}) : y \in G \}\!\}, \dotsc, \{\!\{\chi(x_{k \leftarrow y}) : y \in G \}\!\}).
\]
For each $i$, note that two multisets $\{\!\{ \chi(x_{i \leftarrow y}) : y \in G \}\!\}$ and $\{\!\{ \chi(x'_{i \leftarrow y}) : y \in G \}\!\}$ are equal iff the graphs $\Gamma_{\overline{x}, \chi, i, i}$ and $\Gamma_{\overline{x}', \chi, i, i}$ are color-isomorphic. That is, edge-colored graphs with only self-loops and no other edges are essentially the same, up to isomorphism, as multisets. Our procedure generalizes this by also considering graphs with other edges, which (as we'll see in the proof of equivalence, which will appear in the full version) are used to encode the choice of 2 simultaneous pebbles by Spoiler in each move of the game.
\end{remark}

\begin{theorem} \label{thm:coloring}
Let $G, H$ be two groups of order $n$, with $\overline{x} \in G^k, \overline{y} \in H^k$. Starting from the initial pebbling $x_i \mapsto y_i$ for all $i=1,\dotsc,k$, Spoiler has a winning strategy in the $k$-pebble, $r$-round, 2-ary Version $J$ pebble game (for $J \in \{I, II\}$) iff $\chi^{2,J}_r(\overline{x}) \neq \chi^{2,J}_r(\overline{y})$.
\end{theorem}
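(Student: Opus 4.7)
The proof proceeds by induction on $r$, following the classical pattern of the Cai--F\"urer--Immerman correspondence between WL coloring and bijective pebble games \cite{CFI, Hella1989}, but adapted to the 2-ary setting where Spoiler's choice of two simultaneous pebbles is encoded by an edge-colored graph rather than a multiset.

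For the base case $r=0$, the initial coloring $\chi^{2,J}_0$ is defined exactly so that $\chi^{2,J}_0(\overline{x}) \neq \chi^{2,J}_0(\overline{y})$ iff $(\overline{x},\overline{y})$ fails to give a partial isomorphism (Version I) or a marked isomorphism (Version II). This matches verbatim the winning condition at round 0 of Version $J$ of the pebble game, so the base case is immediate. (For the case $|G| \neq |H|$, Spoiler wins trivially at round 0 and we may assume $|G| = |H|$ throughout.)

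For the inductive step, assume the theorem holds for $r$ rounds; we establish both directions for $r+1$. For the ``colors differ $\Rightarrow$ Spoiler wins'' direction, suppose $\chi^{2,J}_{r+1}(\overline{x}) \neq \chi^{2,J}_{r+1}(\overline{y})$. If already $\chi^{2,J}_r(\overline{x}) \neq \chi^{2,J}_r(\overline{y})$, then by induction Spoiler wins in $r$ rounds (and hence in $r+1$). Otherwise, there must exist indices $i,j \in [k]$ (with $i=j$ allowed, corresponding to Spoiler picking up only one pebble) such that $[\Gamma_{\overline{x}, \chi^{2,J}_r, i, j}] \neq [\Gamma_{\overline{y}, \chi^{2,J}_r, i, j}]$. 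Spoiler's strategy is to pick up pebbles $p_i, p_j$ (or just $p_i$ if $i=j$). For any bijection $f \colon G \to H$ chosen by Duplicator, $f$ fails to be a color-preserving isomorphism between these two edge-colored graphs, so there is some edge $(u,v)$ (with $u=v$ if $i=j$, since those graphs have only self-loops) at which the colors disagree, i.e., $\chi^{2,J}_r(\overline{x}_{(i,j) \leftarrow (u,v)}) \neq \chi^{2,J}_r(\overline{y}_{(i,j) \leftarrow (f(u), f(v))})$. Spoiler places the pebbles on $(u,v) \mapsto (f(u), f(v))$, and by induction wins in the remaining $r$ rounds.

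For the converse ``Spoiler wins $\Rightarrow$ colors differ'' direction, suppose Spoiler has a winning strategy in $r+1$ rounds from the initial pebbling. Then Spoiler's first move selects some pebble index pair $(i,j)$ such that, for every bijection $f \colon G \to H$ Duplicator might choose, there exists a placement $(u,v)$ from which Spoiler wins in $r$ further rounds; by induction, this forces $\chi^{2,J}_r(\overline{x}_{(i,j) \leftarrow (u,v)}) \neq \chi^{2,J}_r(\overline{y}_{(i,j) \leftarrow (f(u), f(v))})$. We claim this implies $[\Gamma_{\overline{x}, \chi^{2,J}_r, i, j}] \neq [\Gamma_{\overline{y}, \chi^{2,J}_r, i, j}]$, whence $\chi^{2,J}_{r+1}(\overline{x}) \neq \chi^{2,J}_{r+1}(\overline{y})$. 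Indeed, if the two edge-colored graphs were isomorphic via some bijection $\varphi$, Duplicator could play $f = \varphi$, and then for every placement $(u,v)$ the induced colors would match, so by the inductive hypothesis Spoiler would lose the remaining $r$ rounds from \emph{every} continuation, contradicting the winning strategy.

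The main conceptual obstacle, and the key insight of the generalized coloring, is that Spoiler's 2-ary move must be encoded not as a multiset (as in 1-ary WL) but as an edge-colored graph: the crucial property is that two such graphs are color-isomorphic iff \emph{some} bijection between the vertex sets witnesses that they carry the same 2-dimensional color pattern. This is precisely what matches Duplicator's choice of a bijection $f$, and makes the above contradiction argument go through. Aside from this, the only care needed is the book-keeping that unifies the $i=j$ and $i \neq j$ cases under the single notation $\overline{x}_{(i,j) \leftarrow (y,z)}$, and the verification that the argument is insensitive to whether $J = I$ or $J = II$, since Version matters only in the base case.
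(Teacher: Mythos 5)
Your proposal is correct and follows essentially the same inductive argument as the paper's proof: the base case is read off from the definition of $\chi^{2,J}_0$, the forward direction has Spoiler exploit a non-color-isomorphic pair of graphs $\Gamma_{\overline{x},\chi,i,j}$, $\Gamma_{\overline{y},\chi,i,j}$, and the converse has Duplicator play a color-isomorphism of those graphs as her bijection (you phrase this as a contradiction where the paper argues the contrapositive directly, but the content is identical). The one small point the paper makes explicit that you leave implicit is that matching $\chi^{2,J}_r$-colors of the newly pebbled tuples also guarantees Duplicator has not \emph{already} lost at that move, which follows because $\chi^{2,J}_r$ refines $\chi^{2,J}_0$.
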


\begin{proof}
To appear in the full version.
\end{proof}

\begin{corollary}
For two groups $G,H$ of the same order and any $k \geq 1$, the following are equivalent:
\begin{enumerate}
\item The 2-ary $k$-pebble game does not distinguish two groups $G,H$
\item The multisets of stable colors on $G^k$ and $H^k$ are the same, that is, $\{\!\{ \chi^{2,J}_\infty(\overline{x}) : \overline{x} \in G^k \}\!\} = \{\!\{ \chi^{2,J}_\infty(\overline{y}) : \overline{y} \in H^k\}\!\}$
\item $\chi^{2,J}_\infty((1_G, 1_G, \dotsc, 1_G)) = \chi^{2,J}_\infty((1_H, \dotsc, 1_H))$.
\end{enumerate}
\end{corollary}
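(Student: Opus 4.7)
The plan is to prove the three-way equivalence via the cycle $(1) \Leftrightarrow (3) \Leftrightarrow (2)$, using \Thm{thm:coloring} to handle the game side and a stability-based counting argument to handle the multiset side.

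First, for $(1) \Leftrightarrow (3)$, I would identify the $k$-pebble game starting with no pebbles placed with the game starting from the initial pebbling $(1_G, \ldots, 1_G) \mapsto (1_H, \ldots, 1_H)$: any as-yet-unplaced pebble may be taken to sit on the identity, and the all-identity pebbling is trivially both a partial and a marked isomorphism, so Duplicator has not yet lost at the outset. Under this identification, \Thm{thm:coloring} immediately gives that Spoiler has no $r$-round winning strategy iff $\chi^{2,J}_r((1_G, \ldots, 1_G)) = \chi^{2,J}_r((1_H, \ldots, 1_H))$; since refinement stabilizes in at most $|G|^k + |H|^k$ steps, quantifying over $r$ yields $(1) \Leftrightarrow (3)$.

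Next, for $(2) \Rightarrow (3)$, I would observe that $(1, \ldots, 1)$ is the unique tuple attaining its own initial color $\chi^{2,J}_0$: in Version~I, partial isomorphism with $(1, \ldots, 1)$ forces $y_i = y_j$ for all $i, j$ and $y_i y_i = y_i$, hence $y_i = 1$; in Version~II the marked isomorphism type forces $\overline{y}$ to generate the trivial subgroup with every component mapped to $1$. Since $\chi^{2,J}_\infty$ refines $\chi^{2,J}_0$, any tuple in $H^k$ attaining the stable color $c^* := \chi^{2,J}_\infty((1_G, \ldots, 1_G))$ must first agree with $(1_G, \ldots, 1_G)$ on the initial color, and hence by uniqueness be $(1_H, \ldots, 1_H)$. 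Multiset equality in (2) forces $c^*$ to appear on the $H^k$ side, so $\chi^{2,J}_\infty((1_H, \ldots, 1_H)) = c^*$, giving (3).

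The main work is in $(3) \Rightarrow (2)$. The idea is that stability unpacks $c^*$ into a full description of the color-class sizes throughout $G^k \sqcup H^k$. Given (3), stability implies that for any stable color $C$ realized on both sides and any index pair $(i, j)$, the multiset $\{\!\{\chi^{2,J}_\infty(\overline{x}_{(i,j) \leftarrow (y, z)}) : (y, z)\}\!\}$ depends only on $C$, $i$, $j$. Letting $N^G(C)$, $N^H(C)$ denote color-class sizes and $n(C, C', i, j)$ the number of $(y, z)$ taking a color-$C$ tuple to a color-$C'$ tuple via index pair $(i, j)$, double-counting pairs $(\overline{x}, (y, z))$ yields the proportionality $N^G(C) \cdot n(C, C', i, j) = N^G(C') \cdot n(C', C, i, j)$, together with the same identity on the $H$ side, with the coefficients $n$ identical between the two groups. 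Starting from $N^G(c^*) = N^H(c^*) = 1$ (from the previous paragraph) and the fact that every tuple in $G^k$ can be reached from $(1, \ldots, 1)$ by repeated two-coordinate substitutions, one propagates $N^G(C) = N^H(C)$ across all stable colors, yielding (2). The main obstacle I anticipate is verifying that the graph on stable colors whose edges encode single-$(i,j)$-substitution is connected, and that the induced system of proportionalities admits a unique consistent solution given the boundary condition $N(c^*) = 1$; both should follow from the observation that two-coordinate substitutions starting from $(1, \ldots, 1)$ reach every tuple in $G^k$ in $\lceil k/2 \rceil$ steps.
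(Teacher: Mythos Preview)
Your argument is correct. The treatment of $(1)\Leftrightarrow(3)$ and $(2)\Rightarrow(3)$ matches the paper's proof essentially verbatim: identify the empty-board game with the game starting from the identity tuple and invoke \Thm{thm:coloring}; use that the identity tuple is uniquely determined by its initial color to extract $(3)$ from multiset equality.

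For $(3)\Rightarrow(2)$ you take a genuinely different route. The paper builds an explicit color-preserving bijection $\varphi_k\colon G^k\to H^k$ by induction on the number $\ell$ of ``active'' coordinates: given $\varphi_\ell$ on tuples of the form $(x_1,\dotsc,x_\ell,1,\dotsc,1)$, it uses the full edge-colored graph isomorphism $\psi_{\overline{x}}\colon\Gamma_{\overline{x}^+,\infty,\ell+1,\ell+2}\to\Gamma_{\varphi_\ell(\overline{x})^+,\infty,\ell+1,\ell+2}$ (guaranteed by equality of stable colors) to fill in coordinates $\ell+1,\ell+2$ and obtain $\varphi_{\ell+2}$. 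Your argument instead discards the graph-isomorphism structure and keeps only the multiset of edge colors, then runs the standard coherent-configuration double count $N(C)\,n(C,C',i,j)=N(C')\,n(C',C,i,j)$ and propagates class sizes outward from $c^*$. Connectivity is exactly as you say, and consistency of the propagation is automatic because the $N^G(C)$ already exist and satisfy all the relations; the cardinality constraint $|G|=|H|$ then rules out colors appearing only on the $H$ side. Your approach is the classical $1$-ary WL argument and shows that the multiset statement does not actually need the extra $2$-ary structure encoded in $[\Gamma_{\overline{x},\chi,i,j}]$; the paper's approach is more constructive and exercises the $2$-ary refinement more directly, yielding an explicit bijection rather than a numerical equality.
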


The analogous result holds in the $q=1$ case, going back to \cite{WLGroups}. 

\begin{proof}
To appear in the full version.
\end{proof}

\begin{remark}
For arbitrary relational structures with relations of arity $a+1$, the $a$-order pebble game may still be nontrivial, as pointed out in Hella \cite[p.~6, just before \S 4]{Hella1993}. Our coloring procedure generalizes in the following way to this more general setting, and the proof of the equivalence between the coloring procedure and Hella's pebble game is the same as the above, \emph{mutatis mutandis}. The main change is that for an $a$-th order pebble game, instead of just considering a graph on edges of size 1 (when $i=j$) or 2 (when $i \neq j$), we consider an $a'$-uniform directed hypergraph, where each hyperedge consists of a list of $a'$ vertices, for all $1 \leq a' \leq a$. This gives a coloring equivalent of the logical and game characterizations provided by Hella; this trifecta is partly why we feel it is justified to call this a ``higher-arity Weisfeiler--Leman'' coloring procedure. 

We note that there has been some work on equivalences with specific binary and higher-arity quantifiers: see for instance, the invertible map game of Dawar \& Holm \cite{DawarHolm} which generalizes rank logic, in which Spoiler can place multiple pebbles, but the bijections Duplicator selects must satisfy additional structure. Subsequently, Dawar \& Vagnozzi \cite{DawarVagnozzi} provided a generalization of Weisfeiler--Leman that further subsumes the invertible map game. We note that Dawar \& Vagnozzi's ``$WL_{k,r}$'', although it looks superficially like our $r$-ary $k$-WL, is in fact quite different: in particular, their refinement step ``flattens'' a multiset of multisets into its multiset union, which loses information compared to our 2-ary (resp., $r$-ary) game; indeed, they show that their WL$_{*,r}$ is equivalent to ordinary (1-ary) WL for any fixed $r$, whereas already 2-ary WL can solve GI. In general, the relationship between Hella's $2$-ary game and the works of Dawar \& Holm and Dawar \& Vagnozzi remains open. %(c.f. Higman's coherent algebras \cite{higman} in the case of (ordinary) WL).}

%It would be interesting if there were also an algebraic equivalent, similar to the cellular algebras of Weisfeiler and Leman \cite{WLOriginal} (see also Higman's coherent algebras \cite{higman}) in the case of (ordinary) WL, but we leave that open for future development.
\end{remark}

\subsection{Equivalence between 2-ary $(k,r)$-WL Versions I and II} \label{sec:equiv12}
In this section we show that, up to additive constants in the number of pebbles and rounds, 2-ary WL Versions I and II are equivalent in their distinguishing power. For two different WL versions $W,W'$, we write $W \preceq W'$ to mean that if $W$ distinguishes two groups $G$ and $H$, then so does $W'$.

\begin{theorem} \label{thm:rounds}
Let $k \geq 2, r \geq 1$. We have that:
\[
(k,r)\text{-WL}^2_{I} \preceq (k,r)\text{-WL}^2_{II} \preceq (k+2, r+1)\text{-WL}^2_{I}.
\]
\end{theorem}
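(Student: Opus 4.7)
The first inequality $(k,r)\text{-WL}^2_I \preceq (k,r)\text{-WL}^2_{II}$ is immediate: every marked isomorphism of pebbled tuples is a fortiori a partial isomorphism, since an isomorphism $\langle\bar g\rangle\to\langle\bar h\rangle$ extending the map $g_i\mapsto h_i$ must preserve every product of two generators. Contrapositively, a pebbled configuration failing partial iso also fails marked iso, so any winning Spoiler strategy in Version~I works verbatim in Version~II with the same parameters.

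For the second inequality, I plan to argue by game simulation. Given a Spoiler winning strategy $S$ in the $(k,r)$-Version~II game, I construct a winning strategy in the $(k+2,r+1)$-Version~I game. Spoiler plays $S$ on pebbles $1,\dotsc,k$, leaving pebbles $k+1$ and $k+2$ in reserve. If the pebbled configuration ever fails partial iso during the first $r$ rounds, Spoiler wins Version~I immediately. Otherwise, at the end of round $r$ the configuration $(\bar g^*,\bar h^*)$ is a partial iso but not a marked iso (precisely Version~II's winning condition), and Spoiler uses the two extra pebbles in round $r+1$ to convert this marked-iso failure into a partial-iso failure.

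The crux reduces to the following lemma: if $(\bar g^*,\bar h^*)$ is a partial iso but not a marked iso, then for every bijection $f\colon G\to H$ that Duplicator plays, Spoiler can choose elements $a,b\in G$ such that the extended configuration $(\bar g^*,a,b)\mapsto(\bar h^*,f(a),f(b))$ fails partial iso. To prove this, I would pick a pair of words $(w_1,w_2)$ witnessing marked-iso failure (satisfying $w_1(\bar g^*)=w_2(\bar g^*)=:u$ in $G$ but $w_1(\bar h^*)\neq w_2(\bar h^*)$ in $H$, or the symmetric version), minimizing the total length $|w_1|+|w_2|$; partial iso forces this to be at least $3$. Spoiler's play is adaptive in $f$: depending on whether Duplicator's $f(u)$ matches $w_1(\bar h^*)$, $w_2(\bar h^*)$, or neither, Spoiler pebbles $u$ alongside an intermediate product from one of $w_1$ or $w_2$, so that the partial-iso chain through the pebbled intermediate pins $f(u)$ to the ``natural'' image along that word, conflicting with Duplicator's actual value.

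The \textbf{main obstacle} is carrying out this chain argument when the minimal witness has both words of length $\geq 4$: a direct length-$n$ partial-iso chain seems to require $n-1$ intermediate pebbles, whereas only two pebbles are available in total. The resolution is to exploit the minimality of the witness, which guarantees that every proper sub-product of either $w_i$ has a single well-defined natural image in $H$ (any ambiguity would yield a shorter witness), together with the fact that Duplicator's bijection, once forced to agree with the natural map on $\bar g^*$, has its values on $\langle\bar g^*\rangle$ constrained by the basic partial-iso conditions already in place. This reduces the chain through a long $w_i$ to a single additional partial-iso constraint involving the pebble on $u$ and one intermediate, which is exactly what two extra pebbles afford.
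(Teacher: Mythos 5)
Your first inequality and the overall simulation for the second are the same as ours: run the Version II strategy on the first $k$ pebbles, and at the end use one extra round and the two reserve pebbles to convert a marked-isomorphism failure into a partial-isomorphism failure. The gap is in your ``crux lemma,'' specifically in the resolution of the obstacle you yourself flag. You anchor the witness to the pebbled tuples --- a pair $(w_1,w_2)$ with $w_1(\overline{g})=w_2(\overline{g})=u$ but $w_1(\overline{h})\neq w_2(\overline{h})$ --- and then assert that Duplicator's bijection $f$, once it agrees with the pebbled map on $\overline{g}$, ``has its values on $\langle\overline{g}\rangle$ constrained by the basic partial-iso conditions already in place.'' That is false: the winning condition constrains $f$ only on pebbled elements, so on an unpebbled intermediate product $v(\overline{g})$ Duplicator may send $f(v(\overline{g}))$ anywhere. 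Knowing that the \emph{natural} image $v(\overline{h})$ is well defined (by minimality of your witness pair) says nothing about whether $f(v(\overline{g}))=v(\overline{h})$; Duplicator can deliberately set $f$ on the one intermediate you pebble so that the single product relation you check is satisfied, pushing the inconsistency deeper into the word. Without control of $f$ on intermediates you are back to the chain argument you correctly identified as needing more than two pebbles.

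The fix is to choose the witness word \emph{after} seeing $f$, and to measure failure against $f$ rather than against the natural word map. WLOG $f$ agrees with the pebbled configuration (else Spoiler wins immediately with one fresh pebble by creating a non-well-defined map). If $f(w(\overline{g}))=w(\overline{h})$ for every word $w$, then $f$ restricts to a marked isomorphism $\langle\overline{g}\rangle\to\langle\overline{h}\rangle$, a contradiction; so take $w$ of minimal length $\ell$ with $f(w(\overline{g}))\neq w(\overline{h})$, and write $w(\overline{g})=g_{i_1}^{\pm 1}w'(\overline{g})$ with $|w'|=\ell-1$. Minimality gives $f(w'(\overline{g}))=w'(\overline{h})$ for free. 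Spoiler places the two reserve pebbles on $w'(\overline{g})$ and $w(\overline{g})$; then $g_{i_1}^{\pm 1}\cdot w'(\overline{g})=w(\overline{g})$ holds among pebbled elements of $G$, while $h_{i_1}^{\pm 1}\cdot f(w'(\overline{g}))=w(\overline{h})\neq f(w(\overline{g}))$, a partial-isomorphism violation (using the standard observation that Duplicator's bijections preserve inverses). No chain and no second word are needed; this is exactly how our proof closes the argument in one round with two extra pebbles.
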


%\begin{remark}
%This is a tighter connection in terms of rounds than that known for ordinary (1-ary) WL. In the case of $q = 1$, the best known round-tradeoff between Versions I and II of the pebble game is one pebble \cite{WLGroups} and an additive $O(\log n)$ rounds \cite{GrochowLevetWL}. While it is possible that this $O(\log n)$ tradeoff can be improved, we believe it is unlikely that only an extra round suffices in the 1-ary case. 
%\end{remark}

\begin{proof}
To appear in the full version.
\end{proof}

\section{Descriptive Complexity of Semisimple Groups} \label{SectionSemisimple}

In this section, we show that the $(O(1), O(1))$-WL$_{II}^{2}$ pebble game can identify groups with no Abelian normal subgroups,\footnote{In many places, we will use $O(1)$ for number of pebbles or rounds; we believe all of these can be replaced with particular numbers by a straightforward, if tedious, analysis of our proofs. However, since our focus is on the fact that these numbers are constant rather than on the exact values, we use the $O(1)$ notation.} also known as semisimple groups. We begin with some preliminaries.

\subsection{Preliminaries}
Semisimple groups are motivated by the following characteristic filtration:
\[
1 \leq \rad(G) \leq \Soc^{*}(G) \leq \text{PKer}(G) \leq G,
\]

\noindent which arises in the computational complexity community where it is known as the Babai--Beals filtration \cite{Babai1999GroupsSA}, as well as in the development of practical algorithms for computer algebra systems (c.f., \cite{CH03}). We now explain the terms of this chain. Here, $\rad(G)$ is the \textit{solvable radical}, which is the unique maximal solvable normal subgroup of $G$; recall that a group $N$ is solvable if the sequence $N^{(0)} := N$, $N^{(i)} = [N^{(i-1)}, N^{(i-1)}]$ terminates in the trivial group after finitely many steps, and $[A,B]$ denotes the subgroup generated by $\{aba^{-1}b^{-1} : a \in A, b \in B \}$. The socle of a group, denoted $\Soc(G)$, is the subgroup generated by all the minimal normal subgroups of $G$. $\Soc^{*}(G)$ is the preimage of the socle $\Soc(G/\rad(G))$ under the natural projection map $\pi : G \to G/\rad(G)$. To define $\text{PKer}$, we start by examining the action on $\Soc(G / \rad(G)) \cong \Soc^*(G) / \rad(G)$ that is induced by the action of $G$ on $\Soc^*(G)$ by conjugation. As $\Soc^{*}(G)/\rad(G) \cong \Soc(G/\rad(G))$ is the direct product of finite, non-Abelian simple groups $T_{1}, \ldots, T_{k}$, this action permutes the $k$ simple factors, yielding a homomorphism $\varphi : G \to S_{k}$. The kernel of this action is denoted $\text{PKer}(G)$.

When $\rad(G)$ is trivial, $G$ has no Abelian normal subgroups (and vice versa). We refer to such groups as \textit{semisimple} (following \cite{BCGQ, BCQ}) or trivial-Fitting (following \cite{CH03}). As a semisimple group $G$ has no Abelian normal subgroups, we have that $\Soc(G)$ is the direct product of non-Abelian simple groups.  As the conjugation action of $G$ on $\Soc(G)$ permutes the direct factors of $\Soc(G)$, there exists a faithful permutation representation $\alpha : G \to G^{*} \leq \Aut(\Soc(G))$. $G$ is determined by $\Soc(G)$ and the action $\alpha$. Let $H$ be a semisimple group with the associated action $\beta : H \to \text{Aut}(\Soc(H))$. We have that $G \cong H$ precisely if $\Soc(G) \cong \Soc(H)$ via an isomorphism that makes $\alpha$ equivalent to $\beta$ in the sense introduced next. 

We now introduce the notion of permutational isomorphism, which is our notion of equivalence for $\alpha$ and $\beta$. Let $A$ and $B$ be finite sets, and let $\pi : A \to B$ be a bijection. For $\sigma \in \text{Sym}(A)$, let $\sigma^{\pi} \in \text{Sym}(B)$ be defined by $\sigma^{\pi} := \pi^{-1}\sigma \pi$. For a set $\Sigma \subseteq \text{Sym}(A)$, denote $\Sigma^{\pi} := \{ \sigma^{\pi} : \sigma \in \Sigma\}$. Let $K \leq \text{Sym}(A)$ and $L \leq \text{Sym}(B)$ be permutation groups. A bijection $\pi : A \to B$ is a \textit{permutational isomorphism} $K \to L$ if $K^{\pi} = L$.

The following lemma, applied with $R = \Soc(G)$ and $S = \Soc(H)$, gives a precise characterization of semisimple groups in terms of the associated actions.      
 
\begin{lemma}[{\cite[Lemma 3.1]{BCGQ}, cf. \cite[\S 3]{CH03}}] \label{CharacterizeSemisimple}
Let $G$ and $H$ be groups, with $R \triangleleft G$ and $S \triangleleft H$ groups with trivial centralizers. Let $\alpha : G \to \Aut(R)$ and $\beta : H \to \Aut(S)$ be faithful permutation representations of $G$ and $H$ via the conjugation action on $R$ and $S$, respectively. Let $f : R \to S$ be an isomorphism. Then $f$ extends to an isomorphism $\hat{f} : G \to H$ if and only if $f$ is a permutational isomorphism between $G^{*} = Im(\alpha)$ and $H^{*} = Im(\beta)$; and if so, $\hat{f} = \alpha f^{*} \beta^{-1}$, where $f^{*} :  G^{*} \to H^{*}$ is the isomorphism induced by $f$.
\end{lemma}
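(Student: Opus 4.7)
The plan is to exploit the faithfulness of $\alpha$ and $\beta$ (guaranteed by $C_G(R)=1$ and $C_H(S)=1$) to identify $G$ with $G^{*}=\alpha(G)\le \Aut(R)$ and $H$ with $H^{*}=\beta(H)\le \Aut(S)$. Under those identifications, ``$f$ extends to an isomorphism $G \to H$'' is equivalent to ``conjugation by $f$ sends $G^{*}$ to $H^{*}$,'' which is exactly the permutational isomorphism condition. Both directions then come out of the single key identity $\alpha(g)(r)=grg^{-1}$ and its analogue for $\beta$.

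($\Rightarrow$) Suppose $\hat{f}\colon G\to H$ is an isomorphism extending $f$. For $g\in G$ and $r\in R$, apply $\hat{f}$ to $\alpha(g)(r)=grg^{-1}$ and use that $\hat{f}|_R=f$ together with the homomorphism property of $\hat{f}$: the left side becomes $f(\alpha(g)(r))$ and the right side becomes $\hat{f}(g)\,f(r)\,\hat{f}(g)^{-1}=\beta(\hat{f}(g))(f(r))$. Since this holds for all $r\in R$, one reads off $f\circ\alpha(g)\circ f^{-1}=\beta(\hat{f}(g))$, i.e.\ $\alpha(g)^{f}\in H^{*}$. The same argument applied to $\hat{f}^{-1}$ (extending $f^{-1}$) yields the reverse containment, so $(G^{*})^{f}=H^{*}$, and $f$ is a permutational isomorphism. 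As a bonus, since $\beta$ is injective, the identity $\beta(\hat{f}(g))=\alpha(g)^{f}$ forces $\hat{f}(g)=\beta^{-1}(\alpha(g)^{f})$, which matches the claimed formula $\hat{f}=\alpha f^{*}\beta^{-1}$ and gives uniqueness of the extension.

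($\Leftarrow$) Conversely, take the above formula as the \emph{definition} of $\hat{f}$: set $\hat{f}(g):=\beta^{-1}(\alpha(g)^{f})$, which is well-defined precisely because $f$ is a permutational isomorphism (so $\alpha(g)^{f}\in H^{*}=\mathrm{Im}(\beta)$). It is a homomorphism because $\alpha$, the conjugation map $\sigma\mapsto\sigma^{f}$, and $\beta^{-1}$ all are. The only nontrivial check is that $\hat{f}|_R=f$: for $r\in R$ and $y\in S$, using that $f\colon R\to S$ is a group homomorphism one computes $(f\circ\alpha(r)\circ f^{-1})(y)=f(r)\,y\,f(r)^{-1}=\beta(f(r))(y)$, hence $\alpha(r)^{f}=\beta(f(r))$ and therefore $\hat{f}(r)=\beta^{-1}(\beta(f(r)))=f(r)$. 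The main obstacle is purely bookkeeping: the formula $\hat{f}=\alpha f^{*}\beta^{-1}$ implicitly fixes a left-to-right composition convention, while $\sigma^{\pi}=\pi^{-1}\sigma\pi$ uses the permutation-group right-action convention, so one has to commit to consistent conventions at the outset for the type-checking of $f\circ\alpha(g)\circ f^{-1}\in \mathrm{Sym}(S)$ in both computations above to go through cleanly.
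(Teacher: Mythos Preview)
The paper does not give its own proof of this lemma; it is quoted verbatim from \cite[Lemma~3.1]{BCGQ} (cf.\ \cite[\S 3]{CH03}) and used as a black box. Your argument is the standard one and is correct: the forward direction is the computation $f\alpha(g)f^{-1}=\beta(\hat f(g))$ obtained by applying $\hat f$ to $grg^{-1}$, and the backward direction defines $\hat f$ by the formula $\alpha f^{*}\beta^{-1}$ and checks it restricts to $f$ on $R$. One small omission worth making explicit in the $(\Leftarrow)$ direction: you verify $\hat f$ is a homomorphism and that $\hat f|_R=f$, but you should also note that $\hat f$ is a bijection (immediate, since $\alpha\colon G\to G^{*}$, $f^{*}\colon G^{*}\to H^{*}$, and $\beta^{-1}\colon H^{*}\to H$ are all bijections). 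Your closing remark about the composition conventions is well taken; committing to the right-action convention $x\mapsto x^{\sigma}$ throughout makes $\sigma^{\pi}=\pi^{-1}\sigma\pi$ type-check as a map $S\to S$ without further comment.
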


We also need the following standard group-theoretic lemmas. The first provides a key condition for identifying whether a non-Abelian simple group belongs to the socle. Namely, if $S_{1} \cong S_{2}$ are non-Abelian simple groups where $S_{1}$ is in the socle and $S_{2}$ is not in the socle, then the normal closures of $S_{1}$ and $S_{2}$ are non-isomorphic. In particular, the normal closure of $S_{1}$ is a direct product of non-Abelian simple groups, while the normal closure of $S_{2}$ is not a direct product of non-Abelian simple groups. We will apply this condition later when $S_{1}$ is a simple direct factor of $\Soc(G)$; in which case, the normal closure of $S_{1}$ is of the form $S_{1}^{k}$. %We include the proofs of these two lemmas for completeness.

\begin{lemma}[c.f. {\cite[Lemma~6.5]{GrochowLevetWL}}] \label{LemmaSocle}
Let $G$ be a finite semisimple group. A subgroup $S \leq G$ is contained in $\Soc(G)$ if and only if the normal closure of $S$ is a direct product of nonabelian simple groups.
\end{lemma}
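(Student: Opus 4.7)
The plan is to prove both implications using two ingredients: (i) the fact, stated just before the lemma, that for a semisimple group $G$ the socle decomposes as $\Soc(G) = T_1 \times \cdots \times T_k$ with each $T_i$ nonabelian simple, and (ii) the classical structural fact that every normal subgroup of a direct product of nonabelian simple groups $T_1 \times \cdots \times T_k$ is a subproduct $\prod_{i \in I} T_i$ for some $I \subseteq [k]$. Fact (ii) is standard and follows by induction on $k$: if $M \trianglelefteq T_1 \times \cdots \times T_k$ and $M \cap T_i = 1$, then $[M,T_i] \leq M \cap T_i = 1$ forces $M$ to centralize $T_i$ and hence $M \leq \prod_{j \neq i} T_j$, while if some $M \cap T_i = T_i$ induction on the quotient finishes the argument. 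I would not dwell on this.

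For the $(\Rightarrow)$ direction, assume $S \leq \Soc(G)$. Since $\Soc(G)$ is characteristic (in particular, normal) in $G$, the normal closure $\langle S^G \rangle$ is contained in $\Soc(G)$, and because it is also a normal subgroup of $\Soc(G)$, fact (ii) gives $\langle S^G \rangle = \prod_{i \in I} T_i$ for some $I \subseteq [k]$, which is by construction a direct product of nonabelian simple groups.

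For the $(\Leftarrow)$ direction, assume $N := \langle S^G \rangle = U_1 \times \cdots \times U_m$ with each $U_j$ nonabelian simple. The factors $U_j$ are precisely the minimal normal subgroups of $N$, hence the conjugation action of $G$ on $N$ permutes them. Picking any $G$-orbit $O$ on $\{U_1, \ldots, U_m\}$, the subgroup $N_O := \prod_{U \in O} U$ is $G$-invariant, hence normal in $G$. I claim $N_O$ is in fact a \emph{minimal} normal subgroup of $G$: any nontrivial $G$-normal subgroup contained in $N_O$ is \emph{a fortiori} normal in $N_O$, so by fact (ii) it is a subproduct of the factors in $O$; and transitivity of the $G$-action on $O$ forces this subproduct to be all of $N_O$. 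Hence every $N_O$ lies in $\Soc(G)$, so $S \leq N = \prod_O N_O \leq \Soc(G)$.

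The only delicate step is the assertion that conjugation by $G$ permutes the direct factors $\{U_j\}$ of $N$; this is automatic once one observes that those factors are intrinsically characterized as the minimal normal subgroups of $N$, a property preserved by every automorphism of $N$ (and in particular by each inner automorphism of $G$ restricted to $N$). No deeper structural input about semisimple groups is needed beyond what appears in the preliminaries.
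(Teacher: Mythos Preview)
Your proof is correct, and both directions rest on the same structural fact the paper uses: every normal subgroup of a direct product of nonabelian simple groups is a subproduct. The $(\Rightarrow)$ direction is essentially identical to the paper's. For $(\Leftarrow)$, the paper instead argues by induction on $|N|$: if $N$ is not minimal normal in $G$, take a proper $G$-normal $M \lneq N$, write $N = L \times M$ via the subproduct fact, and then show $L$ is also $G$-normal because the direct complement of $M$ inside $N$ is unique (nonabelian factors force this); induction then places both $L$ and $M$ in $\Soc(G)$. Your orbit argument is a more direct, non-inductive route: you identify the $G$-orbit products $N_O$ explicitly as minimal normal subgroups of $G$ and conclude in one step. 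The paper's version trades the orbit bookkeeping for a short uniqueness-of-complement observation; yours makes the decomposition of $N$ into minimal normals of $G$ visible. Either way the content is the same.
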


\begin{lemma}[c.f. {\cite[Lemma~6.6]{GrochowLevetWL}}] \label{LemmaDirectProdSimple}
Let $S_1, \dotsc, S_k \leq G$ be nonabelian simple subgroups such that for all distinct $i,j \in [k]$ we have $[S_i, S_j] = 1$. Then $\langle S_1, \dotsc, S_k \rangle = S_1 S_2 \dotsb S_k = S_1 \times \dotsb \times S_k$.
\end{lemma}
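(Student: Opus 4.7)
The plan is to separate the two equalities and handle each one directly, without a serious induction, using only the pairwise commuting hypothesis and the triviality of the center of a nonabelian simple group.

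For the first equality $\langle S_1, \ldots, S_k\rangle = S_1 S_2 \cdots S_k$, the idea is that pairwise commutativity lets every word in the generators be reordered into a canonical form. Any element of $\langle S_1, \ldots, S_k\rangle$ can be written as a product $t_1 t_2 \cdots t_m$ in which each $t_\ell$ belongs to some $S_{i_\ell}$. Because $[S_i, S_j] = 1$ whenever $i \neq j$, adjacent factors coming from distinct subgroups commute, and so by a finite sequence of swaps I can collect all $S_1$-factors to the left, then all $S_2$-factors, and so on, rewriting the product in the form $s_1 s_2 \cdots s_k$ with $s_i \in S_i$. This yields the inclusion $\langle S_1, \ldots, S_k\rangle \subseteq S_1 S_2 \cdots S_k$, and the reverse inclusion is immediate.

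For the second equality $S_1 S_2 \cdots S_k = S_1 \times \cdots \times S_k$ (internal direct product), the standard criterion for pairwise-commuting subgroups reduces the claim to verifying that, for each $i$,
\[
S_i \cap \prod_{j \neq i} S_j = \{1\}.
\]
Let $x$ lie in this intersection. Since $x \in \prod_{j \neq i} S_j$ and each $S_j$ with $j \neq i$ commutes elementwise with $S_i$, the element $x$ centralizes every element of $S_i$. Combined with $x \in S_i$, this gives $x \in Z(S_i)$. But $S_i$ is nonabelian simple, hence $Z(S_i) = \{1\}$, so $x = 1$, as required. Once this intersection condition and the pairwise commuting condition are in place, the map $S_1 \times \cdots \times S_k \to S_1 \cdots S_k$ sending $(s_1, \ldots, s_k) \mapsto s_1 s_2 \cdots s_k$ is a well-defined surjective homomorphism (using commutativity) and is injective (using the intersection condition applied inductively).

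I do not anticipate a serious obstacle; the lemma is essentially a structural fact whose ingredients are (i) the ``collect like factors'' rearrangement enabled by pairwise commutativity and (ii) the triviality of the center of a nonabelian simple group. The only mildly delicate point is the bookkeeping in the rearrangement step, where one must be careful that swapping adjacent factors from different subgroups genuinely leaves the product unchanged; this is immediate from $[S_i, S_j] = 1$, which says that every element of $S_i$ commutes with every element of $S_j$ (not merely that the subgroups commute as sets).
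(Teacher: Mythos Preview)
Your argument is correct. The paper's proof proceeds by induction on $k$: assuming $T := S_1 \cdots S_{k-1} = S_1 \times \cdots \times S_{k-1}$, it observes that $[T, S_k] = 1$, so $TS_k = \langle T, S_k \rangle$ is a central product of $T$ and $S_k$, and since $Z(T) = Z(S_k) = 1$ this central product is in fact direct. Your approach instead handles both equalities globally: the rearrangement argument gives $\langle S_1, \ldots, S_k \rangle = S_1 \cdots S_k$ in one shot, and the intersection condition $S_i \cap \prod_{j \neq i} S_j \subseteq Z(S_i) = 1$ verifies the internal direct product criterion for all $i$ simultaneously. The two routes are close in spirit, but yours is marginally more elementary in that it avoids the central-product language and isolates exactly where the nonabelian-simple hypothesis enters (namely, only through $Z(S_i) = 1$); the paper's induction, on the other hand, packages the two equalities together at each step and makes the ``add one factor at a time'' structure explicit.
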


\begin{comment}
\begin{proof}
By induction on $k$. The base case $k=1$ is vacuously true. Suppose $k \geq 2$ and that the result holds for $k-1$. Then $T := S_1 S_2 \dotsb S_{k-1} = S_1 \times \dotsb \times S_{k-1}$. Now, since $S_k$ commutes with each $S_i$, and they generate $T$, we have that $[S_k, T] = 1$. Hence $T$ is contained in the normalizer (or even the centralizer) of $S_k$, so $TS_k = S_kT = \langle T, S_k \rangle$, and $S_k$ and $T$ are normal subgroups of $TS_k$. As $TS_k = \langle T,S_k \rangle$ and $T,S_k$ are both normal subgroups of $TS_k$ with $[T, S_k] = 1$, we have that $TS_k$ is a central product of $T$ and $S_k$. As $Z(T) = Z(S_k)=1$, it is their direct product.
\end{proof}
\end{comment}

\subsection{Main Results}
We show that the second Ehrenfeucht--Fra\"iss\'e game in Hella's hierarchy can identify both $\Soc(G)$ and the conjugation action when $G$ is semisimple. We first show that this pebble game can identify whether a group is semisimple. Namely, if $G$ is semisimple and $H$ is not semisimple, then Spoiler can distinguish $G$ from $H$. 

\begin{proposition} \label{IdentifySemisimple}
Let $G$ be a semisimple group of order $n$, and let $H$ be an arbitrary group of order $n$. If $H$ is not semisimple, then Spoiler can win in the $(4,2)$-WL$_{II}^{2}$ game.
\end{proposition}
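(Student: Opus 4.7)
The plan is to exhibit Spoiler's two-round winning strategy via a case analysis on whether $Z(H)$ is trivial. Throughout, I assume by the preceding observation that Duplicator's bijections $f$ satisfy $f(1_G) = 1_H$. Since $G$ is semisimple, $Z(G) = 1$ (it is an abelian normal subgroup of $G$), and since $H$ is not semisimple, $H$ admits a nontrivial abelian normal subgroup $A$.

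The first case, $Z(H) \neq 1$, is the cleanest. Pick $a \in Z(H) \setminus \{1\}$. In round~1, Spoiler picks up pebbles $p_1, p_2$; upon receiving Duplicator's bijection $f$, Spoiler sets $g_1 := f^{-1}(a) \neq 1_G$ and, using $Z(G) = 1$, chooses $g_2 \in G$ with $[g_1, g_2] \neq 1$. Spoiler then pebbles $(g_1, g_2) \mapsto (a, f(g_2))$. In round~2, Spoiler picks up the remaining two pebbles; the Version~II winning-condition check on the pebbled pair now compares $\langle g_1, g_2 \rangle$ (nonabelian, as $[g_1, g_2] \neq 1$) with $\langle a, f(g_2) \rangle$ (abelian, as $a$ commutes with every element of $H$), so the map fails to extend to an isomorphism of generated subgroups and Spoiler wins.

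The subtler case is $Z(H) = 1$. Here $A \not\leq Z(H) = 1$, so the conjugation action of $H$ on $A$ is nontrivial; pick $a \in A \setminus \{1\}$ and $b \in H$ with $a' := b a b^{-1} \neq a$, noting that $a, a' \in A$ commute. The centralizer $C_H(a) \supseteq A$, so $|C_H(a)| \geq |A|$, and in fact $|C_H(a)| \geq |H|/|A|$ since the $H$-conjugacy class of $a$ lies inside $A$. I plan to mimic Case~1: given $f$, find $g_2 \in G$ such that $f(g_2) \in C_H(a)$ and $[f^{-1}(a), g_2] \neq 1$, and pebble $(f^{-1}(a), g_2) \mapsto (a, f(g_2))$. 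The round~2 check then fails because $\langle f^{-1}(a), g_2 \rangle$ is nonabelian in $G$ while $\langle a, f(g_2) \rangle \leq C_H(a)$ is abelian in $H$. By pigeonhole, such a $g_2$ exists whenever $|C_H(a)| > |C_G(f^{-1}(a))|$, and Spoiler selects $a \in A$ after observing $f$ to try to enforce this.

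The hard part of the proof will be verifying the pigeonhole inequality in Case~2: for every bijection $f$, there must exist some $a \in A$ with $|C_H(a)| > |C_G(f^{-1}(a))|$. The inequality is plausible because the abelian normal subgroup $A$ inflates centralizers in $H$, whereas in semisimple $G$ the maximum centralizer size of a nontrivial element is controlled by the simple composition factors of $\Soc(G)$. Should the direct pigeonhole fail in some delicate configuration (e.g., when $|A|$ is very small but the $H$-action on $A$ is nontrivial), an alternative strategy uses the full $4$-pebble budget to pebble a $4$-tuple witnessing both $a' = b a b^{-1}$ and $[a, a'] = 1$ simultaneously; by \Lem{LemmaSocle}, no analogous $4$-element configuration (commuting conjugates inside an abelian \emph{normal} subgroup) can exist in $G$ without contradicting semisimplicity, and this stronger invariant can be recast as a distinguishing color in $\chi^{2,II}_2$ via \Thm{thm:coloring}.
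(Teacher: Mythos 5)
Your Case 1 is correct, but Case 2 --- which is the heart of the proposition, since Duplicator controls $f$ and you cannot force $Z(H)\neq 1$ --- has a genuine gap. The pigeonhole inequality $|C_{H}(a)| > |C_{G}(f^{-1}(a))|$ for some $a \in A$ is neither proved nor true in general: a semisimple group has nontrivial elements with very large centralizers (in $G = T^{m}$ with $T$ nonabelian simple, an element supported on a single factor has centralizer of index only $[T : C_{T}(t)]$, e.g.\ index $12$ for a product of copies of $A_{5}$), and nothing prevents Duplicator from routing $A\setminus\{1\}$ onto such elements. Your fallback is also unsound: \Lem{LemmaSocle} does not forbid a pair of distinct commuting conjugates in a semisimple group --- already in $A_{5}$ the conjugates $(12)(34)$ and $(23)(14)$ commute --- so there is no $4$-element ``commuting conjugates'' configuration whose absence in $G$ Spoiler can exploit. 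The invariant that actually separates the two sides is that \emph{all} conjugates of $a$ pairwise commute in $H$ (they lie in the abelian normal subgroup $A$), whereas in $G$ the normal closure of any nontrivial element is non-abelian (else it would be a nontrivial abelian normal subgroup). That property is not witnessed by a bounded configuration in $H$, but its \emph{failure} in $G$ is, and that is what Spoiler pebbles.

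Concretely, this gives a proof with no case split and no counting, using the Version II (marked isomorphism) condition essentially. Fix $a \neq 1$ in an abelian normal subgroup $A \leq H$. Given Duplicator's bijection $f$, set $g := f^{-1}(a) \neq 1_{G}$. Since $\mathrm{ncl}_{G}(g)$ is non-abelian, and since $[g, ege^{-1}] = 1$ for all $e \in G$ would force all conjugates of $g$ to commute pairwise (conjugate the identity $[g, (c^{-1}d)g(c^{-1}d)^{-1}]=1$ by $c$), there exists $e \in G$ with $[g, ege^{-1}] \neq 1$. Spoiler pebbles $(g, e) \mapsto (a, f(e))$. Any marked isomorphism $\varphi \colon \langle g, e \rangle \to \langle a, f(e) \rangle$ would send $ege^{-1}$ to $f(e)\,a\,f(e)^{-1} \in A$, which commutes with $a$, while $ege^{-1}$ does not commute with $g$; hence no such $\varphi$ exists, and the win is detected when the remaining two pebbles are picked up at the next round, within the $(4,2)$ budget. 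Your Case 1 survives as the degenerate instance where commutation with $a$ holds for every element of $H$, but the conjugation-based argument covers both cases at once.
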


\begin{proof}
To appear in the full version.
\end{proof}

We now apply Lemma \ref{LemmaSocle} to show that Duplicator must map the direct factors of $\Soc(G)$ to isomorphic direct factors of $\Soc(H)$. 

\begin{lemma} \label{LemmaProdSimple}
Let $G,H$ be finite groups of order $n$. Let $\Fac(\Soc(G))$ denote the set of simple direct factors of $\Soc(G)$. Let $S \in \Fac(\Soc(G))$ be a non-Abelian simple group, with $S = \langle x, y \rangle$. If Duplicator selects a bijection $f \colon G \to H$ such that:
\begin{enumerate}[label=(\alph*)]
\item $S \not \cong \langle f(x), f(y) \rangle$, then Spoiler can win in the $(2,1)$-WL$_{II}^{2}$ game; or 
\item $f(S) \neq \langle f(x), f(y) \rangle$, then Spoiler can win in the $(4,2)$-WL$_{II}^{2}$ pebble game.
\end{enumerate}
\end{lemma}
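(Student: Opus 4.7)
The plan is to play two rounds. In round~1, Spoiler pebbles $(x,y)\mapsto(f(x),f(y))$. Since we are not in case~(a), this gives a marked isomorphism; let $\varphi\colon S\to T$ (with $T=\langle f(x),f(y)\rangle$) denote the unique group isomorphism extending $x\mapsto f(x)$ and $y\mapsto f(y)$. The round-1 pebbling ``locks in'' $\varphi$ as the only candidate isomorphism that could extend the pebbled map to all of $S$ at any later stage of the game.

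The hypothesis $f(S)\neq T$, combined with $|f(S)|=|S|=|T|$, yields two dual witnesses: some $s\in S$ with $f(s)\notin T$, and some $u\in G\setminus S$ with $f(u)\in T$. Writing $s=w(x,y)$ and $f(u)=w'(f(x),f(y))$ for words $w,w'$, we have $\varphi(s)=w(f(x),f(y))\in T$ and $\varphi(v)=f(u)$ where $v:=w'(x,y)\in S$. Each witness encodes a way in which $f$ disagrees with $\varphi$ on $S$.

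In round~2, Spoiler picks up pebbles~3 and~4. Duplicator commits to a bijection $f'$, which we may assume satisfies $f'(x)=f(x)$ and $f'(y)=f(y)$ (else Spoiler wins at once by re-pebbling $x$ or $y$ with pebble~3). Spoiler then plays adaptively. If $f'(s)\neq\varphi(s)$, Spoiler pebbles $(s,1_G)$: the only iso extending $x\mapsto f(x),y\mapsto f(y)$ to $S$ is $\varphi$, which forces $s\mapsto\varphi(s)$, so the placed pebble $s\mapsto f'(s)$ breaks marked isomorphism and Spoiler wins. Otherwise $f'(s)=\varphi(s)$; by running this case against every witness of the discrepancy, we essentially force $f'|_S=\varphi$ and hence $f'(S)=T$, so that $f'(u)\in H\setminus T$. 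Spoiler then pebbles $(u,v)$; the 4-pebble map demands an iso $\langle S,u\rangle\to\langle T,f'(u)\rangle$ extending $\varphi$ and sending $u\mapsto f'(u)$, $v\mapsto f(u)$. Using \Lem{LemmaSocle} to pin down the normal structure of $S$ inside $\Soc(G)$, together with \Lem{LemmaDirectProdSimple}, one derives a contradiction from this forced iso.

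The main obstacle is executing this last step: for every choice of $f'(u)\in H\setminus T$, one must exhibit a structural incompatibility. I anticipate the argument splits on whether $u\in N_G(S)$, so that conjugation by $u$ acts on $S$ via an automorphism $\alpha_u$ which must be matched by the conjugation action of $f'(u)$ on $T$ through $\varphi\alpha_u\varphi^{-1}$, or whether $u$ permutes $S$ with another simple direct factor of $\Soc(G)$, so that $\langle S,u\rangle$ contains a direct product $S\times uSu^{-1}$ forcing $\langle T,f'(u)\rangle$ to contain a corresponding direct product---which fails because $T$ together with its $f'(u)$-conjugate cannot generate such a product inside $\langle T,f'(u)\rangle$. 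Carrying out each case cleanly, using only the hypothesis that $S$ is a nonabelian-simple direct factor of $\Soc(G)$, is the technical heart of the proof.
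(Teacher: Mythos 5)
Your part~(a) matches the paper. But your strategy for part~(b) has a fatal flaw in the very first move: by spending the round-1 pebbles on $(x,y)\mapsto(f(x),f(y))$, you record nothing about the actual discrepancy $f(S)\neq T$. The witnesses $s$ (with $f(s)\notin T$) and $u$ (with $f(u)\in T$) are properties of the round-1 bijection $f$, which is discarded once the round ends; Duplicator's round-2 bijection $f'$ is only constrained at the two pebbled points. Concretely, take $G=H$, let $f$ fix $x$ and $y$ but move some element of $S$ outside $S$ (so $f(S)\neq S=\langle f(x),f(y)\rangle$), and let Duplicator play $f'=\mathrm{id}$ in round~2. Then $f'|_S=\varphi=\mathrm{id}$, your first sub-case never fires, and your proposed second pebbling $(u,v)\mapsto(f'(u),f(u))$ places pebbles on $u\mapsto u$ and $f(u)\mapsto f(u)$ --- the identity map, a marked isomorphism. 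Spoiler does not win, so the step you flag as ``the technical heart'' is not merely unfinished; no structural incompatibility exists to be found there. (Note the lemma does not assume $G\not\cong H$, so the case $f'|_S=\varphi$ cannot be dismissed.)

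The paper's proof inverts your pebbling order. In round~1 Spoiler pebbles the discrepancy itself: $(a,b)\mapsto(f(a),f(b))$ with $a\in S$, $a\neq 1$, $f(a)\in\langle f(x),f(y)\rangle$ (e.g.\ $a=x$) and $b\in S$ with $f(b)\notin\langle f(x),f(y)\rangle$ (which exists since $|f(S)|=|\langle f(x),f(y)\rangle|$ and $f(S)\neq\langle f(x),f(y)\rangle$). In round~2, given Duplicator's $f'$, Spoiler pebbles the \emph{preimages} $g=(f')^{-1}(f(x))$, $h=(f')^{-1}(f(y))$, so the targets $f(x),f(y)$ are pinned in $H$ regardless of what $f'$ does. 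The marked-isomorphism condition forces $a\in\langle g,h\rangle$ and $\langle g,h\rangle\cong S$; then simplicity of $\langle g,h\rangle$ plus normality of $\Soc(G)$ in $G$ and of $S$ in $\Soc(G)$ forces $\langle g,h\rangle=S$, hence $b\in\langle g,h\rangle$ while $f(b)\notin\langle f(x),f(y)\rangle$ --- a violation Spoiler can point to. This is where the socle hypothesis actually enters, rather than in an analysis of $\langle S,u\rangle$ as you propose.
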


\noindent \\ Note that the lemma does not require $f|_{S}\colon S \to f(S)$ to actually be an isomorphism, only that $S$ and $f(S)$ are isomorphic.

\begin{proof}
To appear in the full version.
\end{proof}

\begin{proposition} \label{PropSocleSemisimple}
Let $G$ be a semisimple group of order $n$, and let $H$ be an arbitrary group of order $n$. Let $f : G \to H$ be the bijection Duplicator selects. If there exists $S \in \Fac(\Soc(G))$ such that $f(S) \notin \Fac(\Soc(H))$ or $f(S) \not\cong S$, then Spoiler can win in the $(4,2)$-WL$_{II}^{2}$ pebble game.
\end{proposition}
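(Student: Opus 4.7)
The plan is to reduce to prior results by pebbling generators of $S$ and applying \Lem{LemmaProdSimple}, then to handle the residual case with a commutation-and-cardinality argument. By \Prop{IdentifySemisimple}, we may assume $H$ is semisimple, since otherwise Spoiler already wins in $(4,2)$-WL$_{II}^{2}$. Since $S \in \Fac(\Soc(G))$ is a finite non-Abelian simple group, it is 2-generated; fix $x, y$ with $S = \langle x, y \rangle$.

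In Round 1, Spoiler pebbles $(x, y) \mapsto (f(x), f(y))$ using pebbles 1 and 2. By \Lem{LemmaProdSimple}, if either $\langle f(x), f(y) \rangle \not\cong S$ or ($\langle f(x), f(y) \rangle \cong S$ and $\langle f(x), f(y) \rangle \neq f(S)$), then Spoiler has a winning strategy in $(4,2)$-WL$_{II}^{2}$. In particular, the disjunct $f(S) \not\cong S$ of the hypothesis is entirely handled: when $\langle f(x), f(y) \rangle \cong S$ we automatically have $f(S) \neq \langle f(x), f(y) \rangle$. Thus the only remaining case is $\langle f(x), f(y) \rangle = f(S) \cong S$, in which $f|_S$ is a genuine group isomorphism onto $f(S)$, and by the hypothesis $f(S) \notin \Fac(\Soc(H))$.

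Since $H$ is semisimple and $f(S)$ is a non-Abelian simple subgroup of $H$ that is not a simple direct factor of $\Soc(H)$, either (i) $f(S) \not\leq \Soc(H)$, in which case \Lem{LemmaSocle} implies the normal closure of $f(S)$ in $H$ is not a direct product of non-Abelian simple groups; or (ii) $f(S) \leq \Soc(H)$ is a ``diagonal'' subgroup of two or more isomorphic simple direct factors of $\Soc(H)$. In both sub-cases one shows $|\bar{S}| > |C_H(f(S))|$, where $\bar{S} := C_{\Soc(G)}(S)$ is the direct complement of $S$ in $\Soc(G)$ (so that $[\bar{S}, S] = 1$ elementwise, by \Lem{LemmaDirectProdSimple}); the inequality comes from $C_H(\Soc(H)) = Z(\Soc(H)) = 1$ for semisimple $H$, combined with the structural fact that a diagonal subgroup has a strictly smaller centralizer in its ambient socle than a true direct factor does.

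In Round 2, Spoiler picks up pebbles 3 and 4; after Duplicator's new bijection $f'$ (which must agree with $f$ on $x, y$ to avoid immediate loss), the cardinality bound yields some $c \in \bar{S}$ with $f'(c) \notin C_H(f(S))$, so $f'(c)$ fails to commute with some $u \in \{f(x), f(y)\}$. Without loss of generality $u = f(x)$. Spoiler pebbles $(c, cx) \mapsto (f'(c), f'(cx))$. Since $cx = xc$ in $G$, the marked-isomorphism condition on $\langle x, y, c, cx \rangle = S \times \langle c \rangle$ forces $f'(c) f(x) = f'(cx) = f(x) f'(c)$, i.e., $f'(c) \in C_H(f(x))$, contradicting the choice of $c$. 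The main obstacle will be verifying the cardinality bound $|\bar{S}| > |C_H(f(S))|$ in sub-case (ii), which requires a careful comparison of the socles of $G$ and $H$ exploiting the fact that a diagonal embedding yields a strictly smaller centralizer than a direct factor.
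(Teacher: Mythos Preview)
Your approach has a genuine gap. The cardinality bound $|\bar{S}| > |C_H(f(S))|$, which you yourself flag as the ``main obstacle,'' is neither proved nor obviously true. The hint you give---that $C_H(\Soc(H)) = 1$ for semisimple $H$---does not yield the bound: nothing forces $|\Soc(G)| = |\Soc(H)|$, and $C_H(f(S))$ may contain elements outside $\Soc(H)$ (e.g., the swap involution centralizes an untwisted diagonal in $A_5 \wr C_2$). Moreover, your dichotomy (i)/(ii) is incomplete: when $f(S) \le \Soc(H)$, it need not be a diagonal across \emph{isomorphic} factors---for instance $f(S)\cong A_5$ could sit as a proper subgroup of a single factor $A_6 \in \Fac(\Soc(H))$, or as a subdiagonal across factors none of which is isomorphic to $S$. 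Each such configuration would need its own centralizer estimate. A smaller point: from $f(S)=\langle f(x),f(y)\rangle \cong S$ you cannot conclude that $f|_S$ is itself an isomorphism---only that the marked map $(x,y)\mapsto(f(x),f(y))$ extends to one. This does not break the rest of your argument (you only use $f(x),f(y)$ as generators), but the claim as written is unjustified.

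The paper avoids all cardinality comparisons. After pebbling $(x,y)\mapsto(f(x),f(y))$ it argues structurally:
\begin{itemize}
\item If $f(S)\not\le\Soc(H)$, then by \Lem{LemmaSocle} the normal closure $\text{ncl}(f(S))$ is \emph{not} a direct product of non-Abelian simples, whereas $\text{ncl}(S)$ is minimal normal in $G$. Hence for Duplicator's next bijection $f'$ there is a $G$-conjugate $T=gSg^{-1}$ with $[f'(T),f(S)]\neq 1$ (via \Lem{LemmaDirectProdSimple}); Spoiler pebbles $g$, and $(x,y,g)$ witnesses the failure since $\langle S,T\rangle\cong S\times S$ while the image is not.
\item If $f(S)\le\Soc(H)$ but $f(S)\not\trianglelefteq\Soc(H)$, then some factor $T=\langle a,b\rangle\in\Fac(\Soc(H))$ fails to normalize $f(S)$. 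Spoiler pebbles the $f'$-preimages of $a,b$; since $S\trianglelefteq\Soc(G)$ these preimages normalize $S$, but $a,b$ do not normalize $f(S)$, so the pebbled map fails to extend.
\end{itemize}
Both cases exploit normalization in the socle directly, sidestepping any comparison of $|\Soc(G)|$ with $|\Soc(H)|$.
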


\begin{proof}
To appear in the full version.
\end{proof}

\begin{lemma} \label{LemmaSimpleOverlap}
Let $G,H$ be groups of order $n$, let $S$ be a nonabelian simple group in $\Fac(\Soc(G))$. Let $f,f'\colon G \to H$ be two bijections selected by Duplicator at two different rounds. If $f(S) \cap f'(S) \neq 1$, then $f(S) = f'(S)$, or Spoiler can win in the $(4,2)$-WL$_{II}^{2}$ pebble game.
\end{lemma}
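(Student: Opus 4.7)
The plan is to reduce to Propositions~\ref{IdentifySemisimple} and~\ref{PropSocleSemisimple}, and then invoke a short group-theoretic fact about simple direct factors. I would argue by contrapositive: assuming Spoiler cannot win in the $(4,2)$-WL$_{II}^{2}$ game, I will derive $f(S) = f'(S)$ from the hypothesis $f(S) \cap f'(S) \neq 1$.

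First, I would observe that we may assume $H$ is semisimple, since otherwise Proposition~\ref{IdentifySemisimple} already provides Spoiler a winning strategy within the $(4,2)$-WL$_{II}^{2}$ budget. Under this assumption, $\Soc(H) = T_1 \times \cdots \times T_m$ is a direct product of nonabelian simple groups, and $\Fac(\Soc(H)) = \{T_1, \ldots, T_m\}$. Second, I would apply Proposition~\ref{PropSocleSemisimple} to each of the bijections $f$ and $f'$: if its hypothesis fires for $f$ (that is, $f(S) \notin \Fac(\Soc(H))$ or $f(S) \not\cong S$), then Spoiler already wins, and similarly for $f'$. So we may assume $f(S), f'(S) \in \Fac(\Soc(H))$.

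The remaining step is purely group-theoretic. I would show that any two simple normal subgroups $N_1, N_2$ of $\Soc(H)$ with $N_1 \cap N_2 \neq 1$ must be equal: choosing any $1 \neq n \in N_1 \cap N_2$, its normal closure in $\Soc(H)$ lies inside both $N_i$ (since each is normal in $\Soc(H)$) and is a nontrivial normal subgroup of each simple $N_i$, so by simplicity $N_1 = N_2$. Applying this to $f(S)$ and $f'(S)$, the hypothesis $f(S) \cap f'(S) \neq 1$ forces $f(S) = f'(S)$, as desired.

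The main obstacle I anticipate is bookkeeping rather than any conceptual difficulty. The two invoked propositions each operate within the $(4,2)$-WL$_{II}^{2}$ budget, and the case analysis above merges them into a single top-level strategy: Spoiler checks whether either proposition's hypothesis is triggered by Duplicator's choice of $f$ or $f'$ and deploys the corresponding sub-strategy, and if neither fires then the group-theoretic step forces $f(S) = f'(S)$. Verifying that this combined strategy does not exceed $4$ pebbles and $2$ rounds is the only detail that will need care.
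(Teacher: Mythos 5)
Your proposal is correct and follows essentially the same route as the paper's own proof: invoke Proposition~\ref{PropSocleSemisimple} to force $f(S)$ and $f'(S)$ to be simple normal subgroups of $\Soc(H)$ (else Spoiler wins within the $(4,2)$ budget), then use the fact that distinct simple normal subgroups of $\Soc(H)$ intersect trivially. The extra detour through Proposition~\ref{IdentifySemisimple} and the explicit normal-closure argument are harmless but not needed, since Proposition~\ref{PropSocleSemisimple} already applies to arbitrary $H$.
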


\begin{proof}
By \Prop{PropSocleSemisimple}, both $f(S)$ and $f'(S)$ must be simple normal subgroups of $\Soc(H)$ (or Spoiler wins with $4$ pebbles and $2$ rounds). Since they intersect nontrivially, but distinct simple normal subgroups of $\Soc(H)$ intersect trivially, the two must be equal.
\end{proof}

We next introduce the notion of weight.

\begin{definition}
Let $\Soc(G) = S_1 \times \dotsb \times S_k$ where each $S_i$ is a simple normal subgroup of $\Soc(G)$. For any $s \in \Soc(G)$, write $s = s_1 s_2 \dotsb s_k$ where each $s_i \in S_i$, and define the \emph{weight} of $s$, denote $\wt(s)$, as the number of $i$'s such that $s_i \neq 1$.
\end{definition}

Note that the definition of weight is well-defined since the $S_i$ are the unique subsets of $\Soc(G)$ that are simple normal subgroup of $\Soc(G)$, so the decomposition $s = s_1 s_2 \dotsc s_k$ is unique up to the order of the factors. (This is essentially a particular instance of the ``rank lemma'' from \cite{GrochowLevetWL}, which intuitively states that WL detects in $O(\log n)$ rounds the set of elements for a given subgroup provided that it also identifies the generators. As we are now in the setting of $2$-ary WL we give the full proof, which also has tighter bounds on the number of rounds.)

\begin{lemma}[Weight Lemma] \label{LemmaSemisimpleWeight}
Let $G, H$ be semisimple groups of order $n$. If Duplicator selects a bijection $f\colon G \to H$ that does not map $\Soc(G)$ bijectively to $\Soc(H)$, or does not preserve the weight of every element in $\Soc(G)$, then Spoiler can win in the $(4,3)$-WL$_{II}^{2}$ game.
\end{lemma}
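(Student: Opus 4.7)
The plan is to exhibit an explicit Spoiler strategy within the $4$-pebble, $3$-round budget. I first apply \Prop{PropSocleSemisimple} as a reduction: if $f$ does not send each $S \in \Fac(\Soc(G))$ bijectively (as a set) to some isomorphic $T \in \Fac(\Soc(H))$, then Spoiler already wins in $(4,2)$-WL$_{II}^{2}$ and we are done. Otherwise, this setwise direct-factor correspondence forces $f$ to preserve weight on every weight-$1$ element of $\Soc(G)$, so any witness $s$ to the failure of weight preservation satisfies $m := \wt(s) \geq 2$. I pick such $s$ of minimum weight $m$ and fix a factorization $s = s_1 \cdot s'$, where $s_1 \in S_1 \in \Fac(\Soc(G))$ has weight $1$ and $s' \in \Soc(G)$ has weight $m - 1$ with support disjoint from $S_1$ (so $s_1$ and $s'$ commute). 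By the minimality of $m$, we have $f(s') \in \Soc(H)$ with $\wt(f(s')) = m - 1$.

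In Round~1 (against $f$), Spoiler pebbles $(s, s_1) \mapsto (f(s), f(s_1))$. For the pebbled map to extend to a marked isomorphism of $\langle s, s_1 \rangle = \langle s_1 \rangle \times \langle s' \rangle$, the element $b := f(s) f(s_1)^{-1}$ (playing the role of $s_1^{-1} s = s'$) must commute with $f(s_1)$, have order $|s'|$, and satisfy $\langle b \rangle \cap \langle f(s_1) \rangle = \{1\}$; otherwise Spoiler wins immediately. In Round~2 (against $f_2$), Spoiler pebbles $s' \mapsto f_2(s')$; the partial-isomorphism condition on $s_1 \cdot s' = s$ then forces $f_2(s') = f(s_1)^{-1} f(s) = b$.

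Round~3 (against $f_3$) is designed to expose that $b$ cannot consistently equal the image of a weight-$(m-1)$ socle element. Spoiler pebbles a second element $x \in S_1$ with $\langle s_1, x \rangle = S_1$, using that nonabelian simple groups are $2$-generated. The marked-isomorphism condition on $\langle s_1, x, s' \rangle = S_1 \times \langle s' \rangle$ then forces $\langle f(s_1), f_3(x) \rangle$ to be a nonabelian simple subgroup of $H$ isomorphic to $S_1$ and commuting elementwise with $b = f_2(s')$, which in turn forces $b$ to lie inside $\Soc(H)$ with support disjoint from the direct factor generated by $f(s_1)$ and $f_3(x)$. Tracing $b$ back through its definition $b = f(s) f(s_1)^{-1}$ with $f(s_1) \in T_1 = f(S_1)$ rules out the assumption that $f(s) \notin \Soc(H)$ (immediately) or that $\wt(f(s)) \neq m$ (by comparing the support of $b$ to that of $f(s)$).

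The main obstacle is the borderline weight cases $\wt(f(s)) \in \{m-2, m-1\}$, in which $\wt(b)$ can coincide with $m - 1$ only through precise interplay between the $T_1$-component of $f(s)$ and $f(s_1)$. The Round~$3$ pebbling of the second $S_1$-generator is exactly what forces the $T_1$-component of $b$ to be trivial, closing off these borderline possibilities and delivering the contradiction within the $3$-round budget.
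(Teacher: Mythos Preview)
Your argument has a genuine gap in the Round~3 step. You claim that the marked-isomorphism condition forces $b = f_2(s')$ to lie in $\Soc(H)$ (with support disjoint from $T_1$), but all it actually forces is that $b$ \emph{centralizes} $T_1 = \langle f(s_1), f_3(x)\rangle$. In a semisimple group $H$ the centralizer of a single simple socle factor $T_1$ need not be contained in $\Soc(H)$: for instance, in $H = A_5 \times S_5$ with $\Soc(H) = A_5 \times A_5$ and $T_1 = A_5 \times 1$, we have $C_H(T_1) = 1 \times S_5 \not\leq \Soc(H)$. So Duplicator can arrange $f(s) \notin \Soc(H)$ with $b = f(s)f(s_1)^{-1}$ centralizing $T_1$, and your final marked-isomorphism check on $\langle s_1, x, s'\rangle \cong S_1 \times \langle s'\rangle$ will pass. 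Even if you were granted $b \in \Soc(H)$, you would still need $\wt(b) = m-1$ to conclude $\wt(f(s)) = m$, but $b = f_2(s')$ and nothing in your setup controls the weight of $f_2(s')$ when $m \geq 3$ (Prop.~\ref{PropSocleSemisimple} only constrains weight-$1$ elements).

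The paper's proof avoids both issues by a different pebbling choice and one extra observation. First, a counting argument (minimality of $m$ plus bijectivity of $f$ on elements of weight $< m$) shows that in fact $\wt(f(s)) > m$, not merely $\neq m$. Second, in Round~1 Spoiler pebbles $(s, s')$ rather than $(s, s_1)$. This pins $f(s')$, which by minimality has weight exactly $m-1$; the unpinned element in Round~2 is then the weight-$1$ element $s_1$, whose image under the \emph{fresh} bijection $f'$ is automatically controlled to weight $1$ by Prop.~\ref{PropSocleSemisimple}. Hence $\wt(f'(s_1)f'(s')) \leq m < \wt(f'(s))$, and pebbling $s_1$ wins. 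The strategic point you are missing is that leaving the weight-$1$ piece free (rather than the weight-$(m-1)$ piece) lets Prop.~\ref{PropSocleSemisimple} do the work in the second round without any further pebbling.
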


\begin{proof}
To appear in the full version.
\end{proof}

\begin{lemma} \label{SocleDirectProductStronger}
Let $G$ and $H$ be semisimple groups with isomorphic socles. Let $S_{1}, S_{2} \in \Fac(\Soc(G))$ be distinct. Let $f : G \to H$ be the bijection that Duplicator selects. If there exist $x_i \in S_i$ such that $f(x_1 x_2) \neq f(x_1) f(x_2)$, then Spoiler can win in the $(4,3)$-WL$_{II}^{2}$ pebble game. 
\end{lemma}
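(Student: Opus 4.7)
The plan is to use the violation $f(x_1 x_2) \neq f(x_1) f(x_2)$ as a handle by pebbling the three elements $\{x_1, x_2, x_1 x_2\}$ across three rounds and letting the marked-isomorphism condition force Duplicator into an inconsistency. Throughout I may assume, using the previously proved Weight Lemma (\Lem{LemmaSemisimpleWeight}), \Prop{PropSocleSemisimple}, and \Lem{LemmaSimpleOverlap}, that Duplicator's bijections preserve weight, map $\Soc(G) \to \Soc(H)$, and send each $S_i \in \Fac(\Soc(G))$ to a simple direct factor $T_i \in \Fac(\Soc(H))$; otherwise Spoiler wins within our $(4,3)$ budget. Write $f(x_1 x_2) = u_1 u_2$ in its weight-$2$ decomposition into distinct simple factors $U_1, U_2$ of $\Soc(H)$.

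In round 1, Spoiler pebbles $(x_1 x_2, x_1) \mapsto (f(x_1 x_2), f(x_1))$. If this does not extend to a marked isomorphism of $\langle x_1, x_2\rangle = \langle x_1, x_1 x_2\rangle$ with $\langle f(x_1), f(x_1 x_2)\rangle$, Spoiler wins. Otherwise the induced iso sends $x_2 = x_1^{-1}(x_1 x_2) \mapsto z := f(x_1)^{-1} f(x_1 x_2)$, and by hypothesis $z \neq f(x_2)$. A weight computation on $z = f(x_1)^{-1} u_1 u_2$ splits into two cases: either (i) $z$ has weight $\geq 2$, or (ii) after relabeling, $T_1 = U_1$ with $u_1 = f(x_1)$, so $z = u_2 \in U_2$. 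The symmetric pebble choice $(x_1 x_2, x_2)$ yields an analogous dichotomy for the image $v = f(x_1 x_2) f(x_2)^{-1}$ of $x_1$, and the product hypothesis rules out both pebble choices simultaneously landing in case (ii)---Spoiler selects whichever choice lands in case (i) whenever one is available.

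In case (i), Spoiler places pebble $3$ on $x_2$ in round 2 (forcing $f'(x_2) = z$ with weight $\geq 2$) and pebble $4$ on a generator $y \in S_2$ with $\langle x_2, y\rangle = S_2$ in round 3 (such $y$ exists by $2$-generation of nonabelian finite simple groups). The marked-iso requirement $\langle z, f'(y)\rangle \cong S_2$ clashes with the structural fact that every simple subgroup of $\Soc(H) = T_1 \times \dotsb \times T_k$ has all non-identity elements of the same weight---for each projection into a simple factor is either injective or trivial by simplicity, so the ``active'' coordinate set is common to all non-identity elements. This rules out $z$ (weight $\geq 2$) and $f'(y)$ (weight $1$ by weight preservation) coexisting in a simple subgroup, so Spoiler wins. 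In case (ii), the same subgroup argument forces $f'(y) \in U_2$ and hence $f'(S_2) = U_2$; Spoiler's round-3 move then exposes the mismatch between $f(S_2) = T_2$ and $f'(S_2) = U_2$ by releasing pebbles $1$ and $2$ and re-pebbling an element whose image under the new bijection $f''$ is forced by the marked iso (extending over the standing pebbles on $x_2, y$) to lie in $U_2$ while simultaneously being forced to coincide with an $f$-value in $T_2$, triggering the nontrivial overlap clause of \Lem{LemmaSimpleOverlap}.

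The main obstacle is the degenerate subcase $T_2 = U_2$ of case (ii), where the mismatch $f(x_2) \neq z$ is between two distinct weight-$1$ elements of the \emph{same} simple factor, so there is no factor clash for \Lem{LemmaSimpleOverlap} to exploit. Here Spoiler must instead use the conjugation action of $x_2$ on $S_2$---encoded via pebble $4$ on the generator $y$ and, in round $3$, a single re-pebble of an element of the form $y x_2 y^{-1}$ or similar---to turn the setwise commitment $f'(S_2) = T_2$ into an explicit group-theoretic incompatibility with the round-$1$ values of $f$. Arranging this uniformly within the one remaining round and pebble is the delicate combinatorial point of the proof, and it is precisely this case that forces the bound to $(4,3)$ rather than $(4,2)$.
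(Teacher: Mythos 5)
Your opening move and the weight-based obstruction are exactly the right idea, and your case (i) is essentially the paper's entire proof --- except that the paper wins a round earlier: once $(x_1, x_1x_2) \mapsto (f(x_1), f(x_1x_2))$ is pebbled and $z = f(x_1)^{-1}f(x_1x_2)$ has weight $\geq 2$, pebbling $x_2$ at round~2 already finishes, since by \Lem{LemmaSemisimpleWeight} the new bijection $f'$ preserves weight, so $\wt(f'(x_2)) = 1$ and hence $f'(x_2) \neq f'(x_1)^{-1}f'(x_1x_2)$, violating even partial isomorphism. Your round-3 generator/constant-weight-of-simple-subgroups argument is sound but superfluous (and, as a side note, completing a \emph{prescribed} nontrivial $x_2$ to a generating pair of $S_2$ needs the Guralnick--Kantor $3/2$-generation theorem, not mere $2$-generation). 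The genuine gap is case (ii): you describe Spoiler's round-3 move only in outline, and you explicitly concede that the degenerate subcase $T_2 = U_2$ is ``the delicate combinatorial point'' that you have not actually arranged within the remaining round and pebble. An admitted unresolved subcase is a hole, not a proof.

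What makes this avoidable is that you already established you never need case (ii). You correctly note that the product hypothesis forbids both symmetric pebble choices from landing in case (ii): if $f(x_1) \in \{u_1, u_2\}$ and $f(x_2) \in \{u_1, u_2\}$, then by injectivity $\{f(x_1), f(x_2)\} = \{u_1, u_2\}$, and since $u_1, u_2$ lie in distinct, hence commuting, simple factors of $\Soc(H)$ this gives $f(x_1)f(x_2) = u_1u_2 = f(x_1x_2)$, contradicting the hypothesis. So one of the two choices \emph{always} lands in case (i), and Spoiler simply plays that one; your ``whenever one is available'' should read ``always.'' This is precisely the ``without loss of generality suppose $y_1 \neq f(x_1)$'' step in the paper's argument. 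Deleting case (ii) entirely and terminating case (i) at round~2 turns your sketch into a correct proof matching the paper's, within the stated $(4,3)$ budget (the $4$ pebbles and $3$ rounds being consumed only in the fallback branch where Duplicator violates weight preservation).
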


\begin{proof}
By \Lem{LemmaSemisimpleWeight}, we may assume that $\wt(s) = \wt(f(s))$ for all $s \in \Soc(G)$; otherwise, Spoiler wins with at most $4$ pebbles and $3$ rounds. As $f(x_{1}x_{2})$ has weight $2$, $f(x_{1}x_{2})$ belongs to the direct product of two simple factors in $\Fac(\Soc(H))$, so it can be written $f(x_1 x_2) = y_1 y_2$ with each $y_i$ in distinct simple factors in $\Fac(\Soc(H))$. Without loss of generality suppose that $y_1 \neq f(x_1)$. Spoiler pebbles $(x_1, x_1 x_2) \mapsto (f(x_1), f(x_1 x_2))$. Now $\wt(x_1^{-1} \cdot x_{1}x_{2})  = 1$, while $\wt(f(x_1)^{-1} \cdot f(x_{1}x_{2})) \geq 2$. (Note that we cannot quite yet directly apply \Lem{LemmaSemisimpleWeight}, because we have not yet identified a single element $x$ such that $\wt(x) \neq \wt(f(x))$.)

On the next round, Duplicator selects another bijection $f'$. Spoiler now pebbles $x_2 \mapsto f'(x_2)$. Because $\wt(x_1^{-1} \cdot x_1 x_2) = 1$ but $\wt(f(x_1)^{-1} f(x_1 x_2)) \geq 2$, and $f'$ preserves weight by \Lem{LemmaSemisimpleWeight}, we have $f'(x_2) \neq f'(x_1)^{-1} f'(x_1 x_2)$. Thus, the pebbled map $(x_1, x_2, x_1 x_2) \mapsto (f'(x_1), f(x_2), f(x_1 x_2))$ does not extend to an isomorphism, and so Spoiler wins with $3$ pebbles and $2$ rounds. 
\end{proof}

Recall that if $G$ is semisimple, then $G \leq \Aut(\Soc(G))$. Now each minimal normal subgroup $N \trianglelefteq G$ is of the form $N = S^{k}$, where $S$ is a non-Abelian simple group. So $\Aut(N) = \Aut(S) \wr \text{Sym}(k)$. In particular, 
\[
G \leq \prod_{ \substack{ N \trianglelefteq G \\ N \text{ is minimal normal}} } \Aut(N).
\]

So if $g \in G$, then the conjugation action of $g$ on $\Soc(G)$ acts by (i) automorphism on each simple direct factor of $\Soc(G)$, and (ii) by permuting the direct factors of $\Soc(G)$. Provided generators of the direct factors of the socle are pebbled, Spoiler can detect inconsistencies of the automorphism action. However, doing so directly would be too expensive as there could be $\Theta(\log|G|)$ generators, so we employ a more subtle approach with a similar outcome. By \Lem{LemmaSemisimpleWeight}, Duplicator must select bijections $f : G \to H$ that preserve weight. That is, if $s \in \Soc(G)$, then $\wt(s) = \wt(f(s))$. We use \Lem{LemmaSemisimpleWeight} in tandem with the fact that the direct factors of the socle commute to effectively pebble the set of all the generators at once. Namely, suppose that $\Fac(\Soc(G)) = \{ S_{1}, \ldots, S_{k}\}$, where $S_{i} = \langle x_{i}, y_{i} \rangle$. Let $x := x_{1} \cdots x_{k}$ and $y := y_{1} \cdots y_{k}$. We will show that it suffices for Spoiler to pebble $(x, y)$ rather than individually pebbling generators for each $S_{i}$ (this will still allow the factors to be permuted, but that is all).

\begin{lemma} \label{SemisimpleFactors}
Let $G$ and $H$ be semisimple groups with isomorphic socles, and write $\Fac(\Soc(G)) = \{S_1, \dotsc, S_m\}$, with $S_i = \langle x_i, y_i \rangle$. Let $f : G \to H$ be the bijection that Duplicator selects, and suppose that (i) for all $i$, $f(S_i) \cong S_i$ (though $f|_{S_i}$ need not be an isomorphism) and $f(S_i) \in \Fac(\Soc(H))$, (ii) for every $s \in \Soc(G)$, $\wt(s) = \wt(f(s))$, and
(iii) for all $i$, $f(S_i) = \langle f(x), f(y) \rangle$. 

Now suppose that Spoiler pebbles $(x_{1} \cdots x_{m}, y_{1} \cdots y_{m}) \mapsto (f(x_{1} \cdots x_{m}), f(y_{1} \cdots y_{m}))$. As $f$ preserves weight, we may write $f(x_{1} \cdots x_{m}) = h_{1} \cdots h_{m}$ and $f(y_{1} \cdots y_{m}) = z_{1} \cdots z_{m}$ with $h_i, z_i \in f(S_i)$ for all $i$.

Let $f' : G \to H$ be the bijection that Duplicator selects at any subsequent round in which the pebble used above has not moved. 
If any of the following hold, then Spoiler can win in the WL$_{II}^{2}$ pebble game with $5$ additional pebbles and $5$ additional rounds:
\begin{enumerate}[label=(\alph*)]
\item $f'$ does not satisfy conditions (i)--(iii),
\item there exists an $i \in [m]$ such that $f'(x_i) \notin \{h_1, \dotsc, h_m\}$ or $f'(y_i) \notin \{z_1, \dotsc, z_m\}$
\item $f'|_{S_{i}}$ is not an isomorphism
\item there exists $g \in G$ and $i \in [m]$ such that $gS_{i}g^{-1} = S_{i}$ and for some $x \in S_{i}$, the following holds: $f'(gxg^{-1}) \neq f'(g)f'(x)f'(g)^{-1}$.
\end{enumerate}
\end{lemma}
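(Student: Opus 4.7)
The plan is to handle each of conditions (a)--(d) separately, exhibiting a Spoiler winning strategy within the $5$-pebble, $5$-round budget in each case. The common tool throughout is the marked-isomorphism winning condition: Spoiler wins as soon as the pebbled image values cannot be jointly extended to a homomorphism on the subgroup generated by all pebbled elements.

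Case (a) follows directly from earlier results: violations of (i), (ii), (iii) are dispatched by \Prop{PropSocleSemisimple}, \Lem{LemmaSemisimpleWeight}, and \Lem{LemmaProdSimple}(b) respectively, each within budget. For case (b), under (i)--(iii), injectivity of $f'$ together with the fact that distinct simple factors of the socle intersect trivially forces $\{f'(S_j)\}_j = \Fac(\Soc(H)) = \{f(S_j)\}_j$, giving a permutation $\pi$ of $[m]$ with $f'(S_i) = f(S_{\pi(i)})$. Thus $h_{\pi(i)}$ is the unique element of $\{h_1, \ldots, h_m\}$ lying in $f'(S_i)$, and $f'(x_i) \notin \{h_1, \ldots, h_m\}$ forces $f'(x_i) \neq h_{\pi(i)}$. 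Spoiler's move is to pebble $(x_i, x x_i^{-1}) \mapsto (f'(x_i), f'(x x_i^{-1}))$; the relation $x = x_i \cdot (x x_i^{-1})$ forces, for marked isomorphism, $f'(x_i) \cdot f'(x x_i^{-1}) = h_1 \cdots h_m$. Analyzing the equation component-wise in $\Soc(H) = f(S_1) \times \cdots \times f(S_m)$: since $f'(x x_i^{-1})$ has weight $m-1$ by (ii) and $f'(x_i)$ has support $\{\pi(i)\}$, the only way the two sides can share the full support $[m]$ is if $f'(x x_i^{-1})$ has support $[m]\setminus\{\pi(i)\}$, in which case the $\pi(i)$-th component of the left side is precisely $f'(x_i)$, forcing $f'(x_i) = h_{\pi(i)}$--contradiction. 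Hence no extension exists, and Spoiler wins at this round using only $2$ additional pebbles. The $y_i$ case is symmetric.

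Case (d) applies the same ``force a relation, extract a mismatch'' template with conjugation: Spoiler pebbles $(g, g x' g^{-1})$ and then $x'$ over two rounds, so that the conjugation relation $g x' g^{-1} = g \cdot x' \cdot g^{-1}$ in the generated subgroup, together with (iii) and weight preservation, ultimately conflicts with the hypothesis $f'(g x' g^{-1}) \neq f'(g) f'(x') f'(g)^{-1}$. The main obstacle is case (c): since ``$f'|_{S_i}$ is not an isomorphism'' is a property of the specific bijection at the current round, and Duplicator is free to select a different bijection $f''$ at the next round--one whose restriction to $S_i$ \emph{is} a homomorphism while still agreeing with $f'$ on all previously pebbled elements--Spoiler cannot expose $f'$'s failure by a single pair of pebbles. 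The plan is to first pebble $(x_i, y_i)$, locking any future consistent extension on $S_i$ to the unique isomorphism $\phi : S_i \to f'(S_i)$ sending $(x_i, y_i) \mapsto (f'(x_i), f'(y_i))$; then to pebble an element $a \in S_i$ with $f'(a) \neq \phi(a)$ alongside a product such as $x \cdot a$, so that the marked-isomorphism constraint on the product relation (involving the already-pebbled $x$) forces the image of $a$ to a value that conflicts with what $\phi$ prescribes on $S_i$, producing the necessary inconsistency. This more delicate argument consumes the bulk of the remaining pebble and round budget.
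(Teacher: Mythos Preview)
Your handling of (a) and (b) is correct and essentially identical to the paper's.

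The genuine gap is in case (c), and the problem is the \emph{order} of pebbling. If Spoiler first pebbles $(x_i,y_i)\mapsto(f'(x_i),f'(y_i))$, then at the next round Duplicator is free to choose $f''$ with $f''|_{S_i}=\phi$. Once Duplicator has done so there is no element $a\in S_i$ with $f''(a)\neq\phi(a)$, so the mismatch---which was a property of $f'$, not of $f''$---has evaporated before Spoiler ever recorded it. Your ``product $x\cdot a$'' idea does not rescue this: the relation $xa=x\cdot a$ only ties $f''(xa)$ to $(h_1\cdots h_m)f''(a)$, and Duplicator can consistently set $f''(xa)=(h_1\cdots h_m)\phi(a)$ on the whole coset $xS_i$ while keeping $f''|_{S_i}=\phi$ and preserving weight. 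Nothing forces the image of $a$ to the old value $f'(a)$.

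The paper reverses the order. Spoiler first pebbles a single witness $g=w(x_i,y_i)\in S_i$ with $f'(g)\neq w(f'(x_i),f'(y_i))$, locking in $f'(g)$. At the next round $f''(S_i)=f'(S_i)$ by \Lem{LemmaSimpleOverlap} (they overlap at $f'(g)$). The key step you are missing is that part~(b) now forces $f''(x_i)=f'(x_i)$ and $f''(y_i)=f'(y_i)$ \emph{without any further pebbling}: each must lie in $\{h_1,\dotsc,h_m\}$ (resp.\ $\{z_1,\dotsc,z_m\}$), and exactly one of those lies in $f''(S_i)=f'(S_i)$. Spoiler then pebbles $(x_i,y_i)$, and the map $(x_i,y_i,g)\mapsto(f'(x_i),f'(y_i),f'(g))$ visibly fails marked isomorphism.

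Your sketch for (d) has the same defect: pebbling $x'$ at the second round yields $f''(x')$, and you need $f''(x')=f'(x')$ for the conjugation relation to contradict the hypothesis; ``(iii) and weight preservation'' do not supply this. The paper again pebbles $(g,gx'g^{-1})$ first and then $(x_i,y_i)$, with their images forced by~(b) as above, so that $x'\in\langle x_i,y_i\rangle$ is pinned to $f'(x')$ by the marked-isomorphism condition itself.
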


\begin{proof}
To appear in the full version.
\end{proof}

\Lem{SemisimpleFactors} provides enough to establish that Spoiler can force Duplicator to select at each round a bijection that restricts to an isomorphism on the socles.

\begin{proposition} \label{SemisimpleSocleIso}
(Same assumptions as \Lem{SemisimpleFactors}.)  Let $G$ and $H$ be semisimple groups with isomorphic socles, with $\Fac(\Soc(G)) = \{S_1, \dotsc, S_m\}$, with $S_i = \langle x_i, y_i \rangle$. Let $f_0 : G \to H$ be the bijection that Duplicator selects, and suppose that (i) for all $i$, $f_0(S_i) \cong S_i$ (though $f_0|_{S_i}$ need not be an isomorphism) and $f_0(S_i) \in \Fac(\Soc(H))$, (ii) for every $s \in \Soc(G)$, $\wt(s) = \wt(f_0(s))$, and
(iii) for all $i$, $f_0(S_i) = \langle f_0(x), f_0(y) \rangle$. Now suppose that Spoiler pebbles $(x_{1} \cdots x_{m}, y_{1} \cdots y_{m}) \mapsto (f_0(x_{1} \cdots x_{m}), f_0(y_{1} \cdots y_{m}))$. 

Let $f' \colon G \to H$ be the bijection that Duplicator selects at any subsequent round in which the pebbles used above have not moved. Then $f'|_{\Soc(G)}\colon \Soc(G) \to \Soc(H)$ must be an isomorphism, or Spoiler can win in $4$ more rounds using at most $6$ more pebbles (for a total of $7$ pebbles and $5$ rounds) in the WL$_{II}^{2}$ pebble game.
\end{proposition}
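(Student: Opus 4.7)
The plan is to reduce the claim that $f'|_{\Soc(G)}\colon \Soc(G) \to \Soc(H)$ is an isomorphism to two subclaims we can already handle: first, that $f'$ restricts to an isomorphism on each simple direct factor $S_i$; and second, that $f'$ respects the internal direct-product decomposition $\Soc(G) = S_1 \times \cdots \times S_m$.

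For the first subclaim, I would invoke \Lem{SemisimpleFactors}. If $f'$ violates conditions (i)--(iii) stated in that lemma (hypothesis (a)) or if $f'|_{S_i}$ fails to be an isomorphism for some $i$ (hypothesis (c)), then Spoiler wins within the additional-pebbles-and-rounds budget promised there. Hence, for the remainder of the analysis, I may assume $f'|_{S_i}$ is an isomorphism onto some $f'(S_i) \in \Fac(\Soc(H))$ for every $i \in [m]$. Combined with weight preservation (guaranteed by condition (ii), cf.\ \Lem{LemmaSemisimpleWeight}), this already forces $f'$ to restrict to a bijection from $\Soc(G)$ onto $\Soc(H)$.

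For the second subclaim, since the $S_i$ commute pairwise inside $\Soc(G)$ and similarly the $f'(S_i)$ commute pairwise inside $\Soc(H)$, showing that $f'|_{\Soc(G)}$ is a homomorphism reduces to showing that for every pair of distinct factors $S_i, S_j$ and every $x \in S_i$, $y \in S_j$ we have $f'(xy) = f'(x)f'(y)$. If this fails for any such $x, y$, then \Lem{SocleDirectProductStronger} provides Spoiler with a winning strategy using $4$ pebbles and $3$ rounds. Given (b) and the factorwise homomorphism from (a), an induction on weight then shows that $f'|_{\Soc(G)}$ is a full group isomorphism $\Soc(G) \to \Soc(H)$, completing the reduction.

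The main obstacle I expect is bookkeeping the pebble and round counts. The initial pebbling of $(x_1 \cdots x_m, y_1 \cdots y_m)$ occupies two pebbles that remain in play throughout the argument, and the strategies supplied by \Lem{SemisimpleFactors} and \Lem{SocleDirectProductStronger} must be carried out while staying within the stated budget of $4$ additional rounds and $6$ additional pebbles. I would argue that only one of the two substrategies actually needs to execute (since Spoiler stops the moment a win is available), and that within either substrategy, pebbles not currently anchoring a critical witness (including potentially the initial pair) can be picked up and reused, so that the total of $7$ pebbles and $5$ rounds suffices.
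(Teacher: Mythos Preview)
Your overall architecture---first force $f'|_{S_i}$ to be an isomorphism via \Lem{SemisimpleFactors}, then control products across distinct factors---matches the paper's, but the second step has a genuine gap. \Lem{SocleDirectProductStronger} only gives the \emph{pairwise} identity $f'(x_ix_j)=f'(x_i)f'(x_j)$ for $x_i\in S_i$, $x_j\in S_j$ with $i\neq j$; it does not yield $f'(s_1\cdots s_w)=f'(s_1)\cdots f'(s_w)$ for $w\geq 3$, and no induction on weight closes the gap. In the inductive step you would need something like $f'\bigl((s_1\cdots s_{w-1})\cdot s_w\bigr)=f'(s_1\cdots s_{w-1})\,f'(s_w)$, but $s_1\cdots s_{w-1}$ lies in no single factor, so \Lem{SocleDirectProductStronger} does not apply. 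Concretely, one can have $f'$ agree with the ``correct'' isomorphism on all elements of weight $\leq 2$, preserve weight everywhere, and still scramble the weight-$3$ elements arbitrarily; nothing in your hypotheses (a) and (b) rules this out.

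The paper's proof handles exactly this obstruction. After reducing (as you do) to $f'|_{S_i}$ being an isomorphism, it takes a witness $t=t_1\cdots t_w$ with $f'(t)\neq f'(t_1)\cdots f'(t_w)$ and first argues, via a separate pebbling-and-weight argument for \emph{each} index $j\leq w$, that $f'(t)$ must lie in $f'(S_1)\times\cdots\times f'(S_w)$. Only once $f'(t)$ is pinned down to the correct block does the paper locate a single coordinate $i$ where the $f'(S_i)$-component of $f'(t)$ differs from $f'(t_i)$, pebble $(t,t_i)$, and finish by pebbling $t_i^{-1}t$ at the next round. This extra localisation step is the missing idea in your proposal, and it is also where the pebble/round bookkeeping gets delicate (and where the ``$6$ more pebbles'' in the statement comes from).
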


\begin{proof}
To appear in the full version.
\end{proof}

\begin{remark}
Brachter \& Schweitzer \cite[Lemma~5.22]{BrachterSchweitzerWLLibrary} previously showed that (1-ary) Weisfeiler--Leman can decide whether two groups have isomorphic socles. However, their results did not solve the search problem; that is, they did not show Duplicator must select bijections that restrict to an isomorphism on the socle even in the case for semisimple groups. This contrasts with \Lem{SemisimpleSocleIso}, where we show that 2-ary WL effectively solves the search problem. This is an important ingredient in our proof that the $(7, O(1))$-WL$_{II}^{2}$ pebble game solves isomorphism for semisimple groups.
\end{remark}

We obtain as a corollary of \Lem{SemisimpleFactors} and \Lem{SemisimpleSocleIso} that if $G$ and $H$ are semisimple, then Duplicator must select bijections that restrict to isomorphisms of $\pker(G)$ and $\pker(H)$.

\begin{corollary} \label{CorPKerIso}
Let $G$ and $H$ be semisimple groups of order $n$. Let $\Fac(\Soc(G)) := \{ S_{1}, \ldots, S_{m}\}$, and suppose that $S_{i} = \langle x_{i}, y_{i} \rangle$. Let $x := x_{1} \cdots x_{m}$ and $y := y_{1} \cdots y_{m}$. and  Let $f : G \to H$ be the bijection that Duplicator selects. Spoiler begins by pebbling $(x, y) \mapsto (f(x), f(y))$. Let $f' : G \to H$ be the bijection that Duplicator selects at the next round. If $f'|_{\pker(G)} : \pker(G) \to \pker(H)$ is not an isomorphism, then Spoiler can win with $5$ additional pebbles and $5$ additional rounds in the WL$_{II}^{2}$ pebble game.
\end{corollary}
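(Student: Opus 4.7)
The plan is to combine \Prop{SemisimpleSocleIso}, which forces $f'|_{\Soc(G)}$ to be an isomorphism, with the conjugation equivariance implicit in \Lem{SemisimpleFactors}(d). First, by \Prop{SemisimpleSocleIso}, Spoiler may assume $f'|_{\Soc(G)}$ is an isomorphism. Second, by \Lem{SemisimpleFactors}(a)--(d), Spoiler may assume $f'$ satisfies conditions (i)--(iii) and the equivariance $f'(gsg^{-1}) = f'(g) f'(s) f'(g)^{-1}$ for every $g \in G$ with $gS_ig^{-1} = S_i$ and every $s \in S_i$; any violation of these assumptions lets Spoiler win within the $5+5$ budget. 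Under these assumptions, the plan is to show that $f'|_{\pker(G)}$ must already restrict to an isomorphism $\pker(G) \to \pker(H)$, contradicting the hypothesis.

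The argument splits into two claims. For Claim 1 (that $f'(\pker(G)) \subseteq \pker(H)$ and $f'|_{\pker(G)}$ is a homomorphism): for $g \in \pker(G)$, equivariance applied to every $s \in S_i$ gives $f'(g) f'(S_i) f'(g)^{-1} \subseteq f'(S_i)$; the symmetric argument with $g^{-1}$ yields equality, so $f'(g) \in \pker(H)$. Next, for $g_1, g_2 \in \pker(G)$, iterated equivariance shows that $f'(g_1 g_2)$ and $f'(g_1) f'(g_2)$ induce the same conjugation automorphism on $\Soc(H)$; by faithfulness of $H \hookrightarrow \Aut(\Soc(H))$ (\Lem{CharacterizeSemisimple}), the two elements are equal.

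For Claim 2 (that $f'(\pker(G)) = \pker(H)$), suppose for contradiction that there exists $h \in \pker(H) \setminus f'(\pker(G))$. Then $g := f'^{-1}(h) \notin \pker(G)$ conjugates some $S_i$ to a distinct $S_j$. Spoiler pebbles $(g, x_i) \mapsto (h, f'(x_i))$ using two fresh pebbles. At the next round, Duplicator's bijection $f''$ is constrained to satisfy $f''(g) = h$ and $f''(x_i) = f'(x_i)$; combined with \Lem{SemisimpleFactors}(b), this pins down $f''(S_i) = f'(S_i)$, and by injectivity of $f''$ on $\{x_1, \ldots, x_m\}$ it follows that $f''(S_j) \neq f'(S_i)$. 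Since $gx_ig^{-1} \in S_j$, we have $f''(gx_ig^{-1}) \in f''(S_j)$ and hence outside $f'(S_i)$; but any marked-isomorphism extension of the pebbled map would require $f''(gx_ig^{-1}) = hf'(x_i)h^{-1} \in f'(S_i)$ (since $h \in \pker(H)$ normalizes each factor). Spoiler pebbles $gx_ig^{-1}$ on a fresh pebble to expose this contradiction and win.

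The main obstacle is Claim 2, since a size mismatch $|\pker(G)| < |\pker(H)|$ is not directly detected by \Lem{SemisimpleFactors}(d): the witnessing $g$ does not normalize $S_i$, so equivariance fails to apply. The key insight is that pebbling only the pair $(g, x_i)$ is enough to fix $f''(S_i)$, after which injectivity of $f''$ on the generators $x_1, \ldots, x_m$ forces $f''(S_j) \neq f'(S_i)$ for the relevant second factor. The cost for Claim 2 is three additional pebbles over two additional rounds, comfortably inside the $5+5$ total, and the worst case is subsumed by the cost of invoking \Lem{SemisimpleFactors} when its conditions fail.
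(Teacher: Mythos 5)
Your proposal is correct and uses the same essential ingredients as the paper's proof, just with the cases carved up differently. The paper splits on whether $f'(\pker(G)) = \pker(H)$ as sets: if not, it pebbles $(g,x_i)$ and then $y_i$ for a $g\in\pker(G)$ whose image fails to normalize some $f'(S_i)$; if so, it invokes Lemma~\ref{CharacterizeSemisimple} (applied to $\pker(G)$, $\pker(H)$, which are themselves semisimple) to reduce ``not an isomorphism'' to a conjugation-equivariance failure, which is exactly Lemma~\ref{SemisimpleFactors}(d). You instead fold the ``maps into $\pker(H)$ homomorphically'' direction entirely into the equivariance guaranteed by Lemma~\ref{SemisimpleFactors}(d) plus faithfulness of the conjugation action on the socle (which is the same fact underlying Lemma~\ref{CharacterizeSemisimple}), and you supply a direct pebbling argument for surjectivity that mirrors the paper's Case~1 from the $H$ side ($h\in\pker(H)$ with preimage outside $\pker(G)$, rather than $g\in\pker(G)$ with image outside $\pker(H)$). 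Both organizations are valid and land within the stated $O(1)$ budgets; the one caveat, shared equally by the paper's own proof, is that invoking Proposition~\ref{SemisimpleSocleIso} to assume $f'$ is an isomorphism on the socle nominally costs more pebbles than the ``$5$ additional'' advertised, so the exact constants should not be taken too literally in either version.
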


\begin{proof}
To appear in the full version.
\end{proof}

We now show that if $G$ and $H$ are not permutationally equivalent, then Spoiler can win.

\begin{lemma} \label{PermutationalIso}
(Same assumptions as \Lem{SemisimpleFactors}.)  Let $G$ and $H$ be semisimple groups with isomorphic socles, with $\Fac(\Soc(G)) = \{S_1, \dotsc, S_m\}$, with $S_i = \langle x_i, y_i \rangle$. Let $f_0 : G \to H$ be the bijection that Duplicator selects, and suppose that (i) for all $i$, $f_0(S_i) \cong S_i$ (though $f_0|_{S_i}$ need not be an isomorphism) and $f_0(S_i) \in \Fac(\Soc(H))$, (ii) for every $s \in \Soc(G)$, $\wt(s) = \wt(f_0(s))$, and
(iii) for all $i$, $f_0(S_i) = \langle f_0(x), f_0(y) \rangle$. Now suppose that Spoiler pebbles $(x_{1} \cdots x_{m}, y_{1} \cdots y_{m}) \mapsto (f_0(x_{1} \cdots x_{m}), f_0(y_{1} \cdots y_{m}))$. 

Let $f' : G \to H$ be the bijection that Duplicator selects at the next round. Suppose that there exist $g \in G$ and $i \in [m]$ such that $f'(gS_{i}g^{-1}) = f'(S_{j})$, but $f'(g)f'(S_{i})f'(g)^{-1} = f'(S_{k})$ for some $k \neq j$. Then Spoiler can win with $4$ additional pebbles and $4$ additional rounds in the WL$_{II}^{2}$ pebble game.
\end{lemma}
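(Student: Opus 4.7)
The plan is to have Spoiler pebble $g$ and $x_i$ simultaneously in the first of the four additional rounds, which fixes $f^{(n)}(g) = f'(g)$ and $f^{(n)}(x_i) = f'(x_i)$ for all subsequent bijections $f^{(n)}$, and moreover (by \Lem{LemmaSimpleOverlap} applied to the pebbled $x_i$, together with the assumption that $f^{(n)}$ satisfies the conditions of \Lem{SemisimpleFactors}(a), else Spoiler wins via that lemma) forces $f^{(n)}(S_i) = f'(S_i)$. In the second additional round, Spoiler pebbles $g x_i g^{-1}$. For the pebbled tuple to extend to a marked isomorphism, the group relation $g \cdot x_i \cdot g^{-1} = g x_i g^{-1}$ must be preserved, i.e., $f^{(2)}(g x_i g^{-1}) = f'(g) f'(x_i) f'(g)^{-1}$; otherwise Spoiler wins on the spot. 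In the other case, since $g x_i g^{-1} \in S_j$ while $f'(g) f'(x_i) f'(g)^{-1} \in f'(S_k)$ is a nonidentity element, \Lem{LemmaSimpleOverlap} forces $f^{(n)}(S_j) = f'(S_k)$ for all subsequent rounds in which the pebble on $g x_i g^{-1}$ is held.

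First I would dispose of the easy sub-cases. If $j = i$ (so $g$ normalizes $S_i$), then $g x_i g^{-1} \in S_i$ would force $f^{(2)}(g x_i g^{-1}) \in f^{(2)}(S_i) = f'(S_i)$, but the iso-consistent value lies in $f'(S_k)$ with $k \neq j = i$; since distinct simple factors of $\Soc(H)$ intersect trivially and $f'(g) f'(x_i) f'(g)^{-1} \neq 1$, this is a contradiction. Symmetrically, if $j \neq i$ but $k = i$, the derived $f^{(n)}(S_j) = f'(S_k) = f'(S_i) = f^{(n)}(S_i)$ violates the fact that $f^{(n)}$ must induce a bijection between $\Fac(\Soc(G))$ and $\Fac(\Soc(H))$ (recall $S_i$ and $S_j$ are distinct factors of $\Soc(G)$). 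In either sub-case, Spoiler wins using only $3$ additional pebbles and $2$ additional rounds.

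The main obstacle is the remaining sub-case with $i, j, k$ pairwise distinct, where no immediate bijection violation arises and Duplicator can still induce a valid factor permutation $\sigma$ with $\sigma(i) = \pi(i)$ and $\sigma(j) = \pi(k)$ (here $\pi$ denotes the factor permutation induced by $f'$, i.e., $f'(S_\ell) = f_0(S_{\pi(\ell)})$). In rounds 3 and 4, Spoiler would pebble $x_j$ and then $g x_j g^{-1}$. By \Lem{SemisimpleFactors}(b) combined with the already-derived $f^{(n)}(S_j) = f'(S_k)$, the pebble on $x_j$ forces $f^{(3)}(x_j) = h_{\pi(k)}$; the pebble on $g x_j g^{-1}$ then either elicits a product-relation violation (instant win) or forces a further constraint $\sigma(\pi_g(j)) = \pi(\pi_g'(k))$, where $\pi_g$ and $\pi_g'$ are the permutations induced on the factor indices by conjugation by $g$ in $G$ and $f'(g)$ in $H$, respectively. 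The hard part is to show that the resulting constraints on $\sigma$—together with the equivariance $\sigma \pi_g = \pi \pi_g' \pi^{-1} \sigma$ that the iso-extension of conjugation-by-$g$ propagates along the whole $\langle g \rangle$-orbit of $S_i$ through the pebbled $(g, x_i, g x_i g^{-1}, x_j, g x_j g^{-1})$—are inconsistent with $\sigma$ being a bijection, thereby forcing Duplicator to fail by round $4$ within the stated budget of $4$ additional pebbles.
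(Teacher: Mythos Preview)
Your treatment of the degenerate cases $j=i$ and $k=i$ is fine and essentially matches the paper's Cases~1 and~2. The problem is the main case $i,j,k$ pairwise distinct, which you explicitly leave unfinished (``the hard part is to show\dots''). The mechanism you propose does not close the gap. Your pebbled tuple $(g, x_i, gx_ig^{-1}, x_j, gx_jg^{-1})$ pins Duplicator's factor permutation~$\sigma$ only at the three indices $i$, $j=\pi_g(i)$, and $\pi_g(j)$; it does not force the global equivariance $\sigma\pi_g=\pi\pi_g'\pi^{-1}\sigma$ along the whole $\langle g\rangle$-orbit, since you have pebbled only one generator of each relevant factor and the marked-isomorphism condition therefore controls only the cyclic subgroups $\langle g^nx_ig^{-n}\rangle$ and $\langle g^nx_jg^{-n}\rangle$, not the full factors. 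If both the $\langle g\rangle$-orbit of $S_i$ and the $\langle f'(g)\rangle$-orbit of $f'(S_i)$ have length at least~$3$, Duplicator can extend~$\sigma$ consistently at the remaining indices and no bijection conflict arises. Your plan also places five new pebbles rather than four.

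The paper's argument for this case is different. Spoiler pebbles $(g, x_ix_j)$ in one move, after noting $f'(x_ix_j)=f'(x_i)f'(x_j)$ via \Lem{SocleDirectProductStronger}. The single pebble on the \emph{product} $x_ix_j$ forces any subsequent $f''$ to satisfy $\{f''(S_i),f''(S_j)\}=\{f'(S_i),f'(S_j)\}$: the pair of factors is either preserved or swapped. If preserved, Spoiler pebbles $(x_i,y_i)$; since these generate all of $S_i$, the marked isomorphism now determines the image of all of $S_j=gS_ig^{-1}$ as lying in $f'(g)f''(S_i)f'(g)^{-1}=f'(S_k)$, contradicting that $x_j=x_i^{-1}(x_ix_j)$ maps into $f''(S_j)=f'(S_j)\neq f'(S_k)$. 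The swapped sub-case splits once more on whether $gS_jg^{-1}=S_i$, each branch finishing with one further pair of pebbles. The two devices you are missing are (a)~using the product $x_ix_j$ to lock two socle factors with one pebble, and (b)~pebbling a full generating pair $(x_\ell,y_\ell)$ so that the marked-isomorphism condition captures an entire simple factor rather than just a cyclic subgroup.
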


\begin{proof}
To appear in the full version.
\end{proof}

\begin{theorem} \label{SemisimpleIsomorphism}
Let $G$ be a semisimple group and $H$ an arbitrary group of order $n$, not isomorphic to $G$. Then Spoiler has a winning strategy in the $(9, O(1))$-WL$_{II}^{2}$ pebble game.
\end{theorem}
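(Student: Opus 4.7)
The plan is to assemble the lemmas of this section into a sequence of reductions that step-by-step force Duplicator's bijection to become an honest isomorphism. The first step is to dispose of the case that $H$ is not semisimple: by \Prop{IdentifySemisimple}, Spoiler wins with $4$ pebbles and $2$ rounds. So from now on assume $H$ is semisimple. Write $\Fac(\Soc(G)) = \{S_1, \dotsc, S_m\}$ with $S_i = \langle x_i, y_i \rangle$, and set $x := x_1 \cdots x_m$, $y := y_1 \cdots y_m$. At each round, let $f \colon G \to H$ be the bijection Duplicator selects. Using \Prop{PropSocleSemisimple}, \Lem{LemmaProdSimple}, and \Lem{LemmaSemisimpleWeight} in turn, Spoiler can force (or else immediately win using at most $4$ pebbles and $3$ rounds) that $f$ restricted to the simple factors of $\Soc(G)$ maps them (as sets) to isomorphic simple factors of $\Soc(H)$, and that $f$ preserves weight on $\Soc(G)$. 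Thus hypotheses (i)--(iii) of \Lem{SemisimpleFactors} are met.

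Spoiler now pebbles the pair $(x, y) \mapsto (f(x), f(y))$, consuming $2$ of the $9$ pebbles. With these two pebbles on the board, \Prop{SemisimpleSocleIso} says that at any subsequent round the bijection Duplicator selects must restrict to an actual isomorphism $\Soc(G) \to \Soc(H)$, or Spoiler wins using at most $7$ pebbles and $5$ rounds total. By \Cor{CorPKerIso}, with the same two ``generator'' pebbles down, Duplicator's bijection must also restrict to an isomorphism $\pker(G) \to \pker(H)$, or Spoiler wins within $5$ more rounds using $5$ more pebbles (well within the $9$ budget). So we may assume from now on that the bijection $f$ Duplicator selects is an isomorphism on both $\Soc(G)$ and $\pker(G)$.

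Now invoke the Babai--Codenotti--Qiao characterization (\Lem{CharacterizeSemisimple}): since $G \not\cong H$ and $f|_{\Soc(G)}$ is an isomorphism of socles, $f|_{\Soc(G)}$ fails to be a permutational isomorphism between $G^* \leq \Aut(\Soc(G))$ and $H^* \leq \Aut(\Soc(H))$. Concretely, there is some $g \in G$ such that the conjugation action of $f(g)$ on $\Soc(H)$ is not the pushforward under $f$ of the conjugation action of $g$ on $\Soc(G)$. Two possibilities arise: either (A) the induced permutation on simple factors disagrees, i.e.\ there exist $i, j, k$ with $g S_i g^{-1} = S_j$ but $f(g) f(S_i) f(g)^{-1} = f(S_k) \neq f(S_j)$; or (B) the factor-permutation agrees but for some $i$ with $g S_i g^{-1} = S_i$ the automorphism $\sigma_{f(g)}|_{f(S_i)}$ is not the pushforward of $\sigma_g|_{S_i}$ under $f$. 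Case (A) is exactly the scenario of \Lem{PermutationalIso}, so Spoiler wins with $4$ additional pebbles and $4$ additional rounds. Case (B) is handled by \Lem{SemisimpleFactors}(d) (using that $f$ already restricts to an isomorphism on each $S_i$, hence on $\Soc(G)$, which means condition (c) of that lemma is met), and Spoiler wins with $5$ additional pebbles and $5$ additional rounds.

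Adding the $2$ pebbles used to pin down $(x,y)$ to the at most $7$ additional pebbles required by the worst of the above cases gives $9$ pebbles and $O(1)$ rounds, as claimed. The main conceptual obstacle, already absorbed into \Lem{SemisimpleFactors}--\Lem{PermutationalIso}, is that one cannot afford to pebble generators of each simple factor individually (there may be $\Theta(\log n)$ factors); instead, pebbling only the two ``batched'' generators $x$ and $y$, together with weight-preservation (\Lem{LemmaSemisimpleWeight}) and the overlap lemma (\Lem{LemmaSimpleOverlap}), lets Spoiler reconstruct, one factor at a time, the obstruction to a permutational isomorphism.
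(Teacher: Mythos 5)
Your overall architecture matches the paper's: dispose of non-semisimple $H$, pebble the two ``batched'' generators $(x,y)$, force Duplicator's bijections to be isomorphisms on $\Soc(G)$ via \Prop{SemisimpleSocleIso}, and then convert the failure of \Lem{CharacterizeSemisimple} into a win. However, your case analysis of how the permutational isomorphism fails is incomplete, and the missing case is exactly where the paper does its hardest work. You split the failure into (A) the induced permutation of simple factors disagrees, and (B) $gS_ig^{-1}=S_i$ but the automorphism of that fixed factor is not pushed forward correctly. This omits the mixed case: $gS_ig^{-1}=S_j$ with $j\neq i$, the factor permutation \emph{agrees} (so $f(g)f(S_i)f(g)^{-1}=f(S_j)$ and \Lem{PermutationalIso} does not apply), yet the element-level map disagrees, i.e.\ $f(gs_ig^{-1})\neq f(g)f(s_i)f(g)^{-1}$ for some $s_i\in S_i$. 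Your case (B) cannot absorb this, since \Lem{SemisimpleFactors}(d) is stated only for factors fixed by $g$, and there is no a priori reason the discrepancy can be transferred to a fixed factor (e.g.\ by passing to $g^2$ the discrepancy may cancel). The paper's proof devotes its Case 3 to precisely this situation: Spoiler pebbles $(g, gs_1g^{-1})\mapsto(f(g),f(gs_1g^{-1}))$, uses \Lem{LemmaSimpleOverlap} to force $f'(S_j)=f(S_j)$ at the next round, rules out $f'(S_1)\neq f(S_1)$ by applying \Lem{PermutationalIso} to the conjugation by $g^{-1}$, concludes $f'(x_1)=f(x_1)$, $f'(y_1)=f(y_1)$ and hence $f'(s_1)=f(s_1)$ (both bijections being isomorphisms on the socle), and then pebbles $(x_1,y_1)$ so that the pebbled map $(g,x_1,y_1,gs_1g^{-1})$ cannot extend to an isomorphism. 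That branch costs up to $8$ pebbles and $7$ rounds and is the bottleneck for the stated pebble count (the ninth pebble being the one picked up before the winning condition is checked), so without it both the argument and the constant are unjustified.

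Two smaller points: the paper also includes an explicit telescoping argument showing that a violation $f(gsg^{-1})\neq f(g)f(s)f(g)^{-1}$ for some $s\in\Soc(G)$ can be localized to a single simple factor $s_i$, using that $f$ is a homomorphism on $\Soc(G)$ and that images of distinct factors commute; you assert the factor-level dichotomy directly, so you should supply this reduction. Your invocation of \Cor{CorPKerIso} is harmless but unnecessary --- the paper's proof does not route through $\pker$.
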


\begin{proof}
If $H$ is not semisimple, then by \Prop{IdentifySemisimple}, Spoiler wins with $4$ pebbles and $2$ rounds. So we now suppose $H$ is semisimple.

Let $\Fac(\Soc(G)) = \{S_1, \dotsc, S_k\}$, and let $x_i, y_i$ be generators of $S_i$ for each $i$. Let $f$ be the bijection chosen by Duplicator. Spoiler pebbles $(x_1 x_2 \dotsb x_k, y_1 y_2, \dotsc, y_k) \mapsto (f(x_1 \dotsb x_k), f(y_1 \dotsb y_k))$. On subsequent rounds, we thus have satisfied the hypotheses of \Lem{SemisimpleFactors} and \Prop{SemisimpleSocleIso}. Spoiler will never move this pebble, and thus all subsequent bijections chosen by Duplicator must restrict to isomorphisms on the socle (or Spoiler wins with at most $7$ pebbles and $O(1)$ rounds).

Recall from \Lem{CharacterizeSemisimple} that $G \cong H$ iff there is an isomorphism $\mu\colon \Soc(G) \to \Soc(H)$ that induces a permutational isomorphism $\mu^*\colon G^* \to H^*$. Thus, since $G \not\cong H$, there must be some $g \in G$ and $s \in \Soc(G)$ such that $f(gsg^{-1}) \neq f(g)f(s)f(g)^{-1}$. Write $s = s_1 \dotsb s_k$ with each $s_i \in S_i$ (not necessarily nontrivial). We claim that there exists some $i$ such that $f(gs_i g^{-1}) \neq f(g) f(s_i) f(g)^{-1}$. For suppose not, then we have
\begin{eqnarray*}
f(gsg^{-1}) & = & f(gs_1 g^{-1} gs_2 g^{-1} \dotsb g s_k g^{-1}) \\
& = &  f(gs_1g^{-1})f(gs_2 g^{-1}) \dotsb f(gs_k g^{-1}) \\
& = & f(g) f(s_1) f(g)^{-1} f(g) f(s_2) f(g)^{-1} \dotsb f(g) f(s_k) f(g)^{-1} \\
& = & f(g) f(s_1 \dotsb s_k) f(g)^{-1} = f(g) f(s) f(g)^{-1},
\end{eqnarray*}
a contradiction. For simplicity of notation, without loss of generality we may assume $i=1$, so we now have $f(gs_1 g)^{-1} \neq f(g) f(s_1) f(g)^{-1}$.

We break the argument into cases:

\begin{enumerate}
\item If $gs_1 g^{-1} \in S_{1}$, then we have $gS_1 g^{-1} = S_1$ (any two distinct simple normal factors of the socle intersect trivially), we have by \Lem{SemisimpleFactors}(d) that Spoiler can win with at most $5$ additional pebbles (for a total of $7$ pebbles) and $5$ additional rounds (for a total of $6$ rounds). 

\item If $gs_1 g^{-1} \in S_j$ for $j \neq 1$ and $f(g)f(s_{1})f(g)^{-1} \notin f(S_j)$, we have by \Lem{PermutationalIso} that Spoiler can win with at most $4$ additional pebbles (for a total of $6$ pebbles) and $4$ additional rounds (for a total of $5$ rounds).

\item Suppose now that $gs_1 g^{-1} \in S_j$ for some $j \neq 1$ and $f(g)f(s_{1})f(g)^{-1} \in f(S_j)$. Spoiler begins by pebbling $(g, gs_{1}g^{-1}) \mapsto (f(g), f(gs_{1}g^{-1}))$. Let $f' : G \to H$ be the bijection that Duplicator selects at the next round. As $gs_{1}g^{-1} \in S_{j}$ is pebbled, we have that $f'(S_{j}) = f(S_{j})$ by Lem.~\ref{LemmaSimpleOverlap} (or Spoiler wins with $4$ additional pebbles and $2$ additional rounds). Now by assumption, $gS_{1}g^{-1} = S_{j}$ and $f(g)f(S_{1})f(g)^{-1} = f(S_{j})$. So as $g \mapsto f(g)$ is pebbled, we claim that we may assume $f'(S_{1}) = f(S_{1})$. For suppose not; then we have $g^{-1} S_j g = S_1$ but $f'(g)^{-1} f'(S_j) f'(g) = f(g)^{-1} f(S_j) f(g) = f(S_1) \neq f'(S_1)$. But then Spoiler can with win with $4$ additional pebbles (for a total of $8$ pebbles) and $4$ additional rounds (for a total of $7$ rounds) by \Lem{PermutationalIso}. Thus we have $f'(S_1) = f(S_1)$.

In particular, we have that $f'(x_{1}) = f(x_{1})$ and $f'(y_{1}) = f(y_{1})$, by the same argument as in the proof of \Lem{SemisimpleFactors}(c). As $S_{1} = \langle x_{1}, y_{1} \rangle$, we have that $f'(s_{1}) = f(s_{1})$, since they are both isomorphisms on the socle by \Prop{SemisimpleSocleIso}. Spoiler now pebbles $(x_{1}, y_{1}) \mapsto (f'(x_{1}), f'(y_{1}))$. As the pebbled map $(g, x_{1}, y_{1}, gs_{1}g^{-1}) \mapsto (f(g),  f'(x_{1}), f'(y_{1}), f'(gs_{1}g^{-1}))$ does not extend to an isomorphism, Spoiler wins. In this case, Spoiler used at most $8$ pebbles and $7$ rounds.
\end{enumerate}

\noindent Note that the ninth pebble is the one we pick up prior to checking the winning condition.
\end{proof}

\section{Conclusion}

We exhibited a novel Weisfeiler--Leman algorithm that provides an algorithmic characterization of the second Ehrenfeucht--Fra\"iss\'e game in Hella's \cite{Hella1989, Hella1993} hierarchy. We also showed that this Ehrenfeucht--Fra\"iss\'e game can identify groups without Abelian normal subgroups using $O(1)$ pebbles and $O(1)$ rounds. In particular, within the first few rounds, Spoiler can force Duplicator to select an isomorphism at each subsequent round. This effectively solves the search problem in the pebble game characterization. 

Our work leaves several directions for further research.

\begin{question}
Can the constant-dimensional $2$-ary Wesifeiler--Leman algorithm be implemented in time $n^{o(\log n)}$?
\end{question}

\begin{question}
What is the (1-ary) Weisfeiler--Leman dimension of groups without Abelian normal subgroups?
\end{question}

\begin{question}
Show that the second Ehrenfeucht--Fra\"iss\'e game in Hella's hierarchy can identify coprime extensions of the form $H \ltimes N$ with both $H,N$ Abelian (the analogue of \cite{QST11}). More generally, an analogue of Babai--Qiao \cite{BQ} would be to show that when $|H|,|N|$ are coprime and $N$ is Abelian, that Spoiler can distinguish $H \ltimes N$ from any non-isomorphic group using a constant number of pebbles that is no more than that which is required to identify $H$ (or the maximum of that of $H$ and a constant independent of $N,H$). 
\end{question}

\begin{question}
Let $p > 2$ be prime, and let $G$ be a $p$-group with bounded genus. Show that in the second Ehrenfeucht--Fra\"iss\'e game in Hella's hierarchy, Spoiler has a winning strategy using a constant number of pebbles. This is a descriptive complexity analogue of  \cite{BMWGenus2, IvanyosQ19}. It would even be of interest to start with the case where $G$ has bounded genus over a field extension $K/\mathbb{F}_{p}$ of bounded degree.
\end{question}

In the setting of groups, Hella's hierarchy collapses to some $q \leq 3$, since 3-ary WL can identify all ternary relational structures, including groups. It remains open to determine whether this hierarchy collapses further to either $q = 1$ or $q = 2$. Even if it does not collapse, it would also be of interest to determine whether the $1$-ary and $2$-ary games are equivalent. Algorithmically, this is equivalent to determining whether $1$-ary and $2$-ary WL are have the same distinguishing power.

\begin{question}
Does there exist an infinite family of non-isomorphic pairs of groups $\{ (G_{n}, H_{n})\}$ for which Spoiler requires $\omega(1)$ pebbles to distinguish $G_{n}$ from $H_{n}$? We ask this question for the Ehrenfeucht--Fra\"iss\'e games at both the first and second levels of Hella's hierarchy. 
\end{question}

Recall that the game at the first level of Hella's hierarchy is equivalent to Weisfeiler--Leman \cite{CFI, Hella1989, Hella1993}, and so a lower bound against either of these games provides a lower bound against Weisfeiler--Leman. More generally, it would also be of interest to investigate Hella's hierarchy on higher arity structures. For a $q$-ary relational structure, the $q$-ary pebble game suffices to decide isomorphism. Are there interesting, natural classes of higher arity structures for which Hella's hierarchy collapses further to some level $q' < q$?

\section*{Acknowledgment}
The authors would like to thank Pascal Schweitzer and the anonymous referees for feedback on earlier versions of some of the results in this paper, as well as discussions more directly relevant to the paper. JAG would like to thank Martin Grohe for helpful correspondence about WL versus ``oblivious WL.'' JAG was partially supported by NSF award DMS-1750319 and NSF CAREER award CCF-2047756 and during this work. ML was partially supported by J. Grochow startup funds, NSF award CISE-2047756, and a Summer Research Fellowship through the Department of Computer Science at the University of Colorado Boulder. 

%\nocite{*}
\bibliographystyle{eptcs}
\bibliography{references}

%\includepdf[pages=1-20]{GLTwoPebbles.pdf}
\end{document}